\newtheorem{theorem}{Theorem}
\newtheorem{proposition}{Proposition}
\newtheorem{lemma}{Lemma}
\newtheorem{corollary}{Corollary}
\theoremstyle{definition}
\newtheorem{defn}{Definition}
\theoremstyle{remark}
\newtheorem*{remark}{Remark}
\newtheorem*{example}{Example}
\pgfplotsset{compat=1.14}
\newif\ifcomment
\newcommand{\eqdef}{\vcentcolon=}
\newcommand{\X}{\ensuremath{\mathbf{X}}}
\renewcommand{\d}{\ensuremath{\mathbf{d}}}
\renewcommand{\i}{\ensuremath{\mathbf{i}}}
\newcommand{\y}{\ensuremath{\mathbf{y}}}
\renewcommand{\c}{\ensuremath{\mathbf{c}}}
\renewcommand{\v}{\ensuremath{\mathbf{v}}}
\newcommand{\p}{\ensuremath{\mathbf{p}}}
\renewcommand{\a}{\ensuremath{\mathbf{a}}}
\renewcommand{\b}{\ensuremath{\mathbf{b}}}
\newcommand{\0}{\ensuremath{\mathbf{0}}}
\newcommand{\F}{\ensuremath{\mathbb{F}}}
\newcommand{\cF}{\ensuremath{\mathcal{F}}}
\newcommand{\C}{\ensuremath{\mathcal{C}}}
\renewcommand{\L}{\ensuremath{\mathcal{L}}}
\newcommand{\Z}{\ensuremath{\mathbb{Z}}}
\renewcommand{\epsilon}{\varepsilon}
\newcommand{\Mod}[1]{\ (\mathrm{mod}\ #1)}
\newcommand{\Mods}[1]{\ (\mathrm{mod}^*\ #1)}
\newcolumntype{C}{>{$}c<{$}} 
\begin{document}
\title{Lifted Reed-Solomon Codes with Application to Batch Codes}
\author{%
    \IEEEauthorblockN{\textbf{Lukas~Holzbaur}\IEEEauthorrefmark{1}, \textbf{Rina~Polyanskaya}\IEEEauthorrefmark{2},
\textbf{Nikita~Polyanskii}\IEEEauthorrefmark{1}\IEEEauthorrefmark{3}, and \textbf{Ilya~Vorobyev}\IEEEauthorrefmark{3}\IEEEauthorrefmark{4} }
	\IEEEauthorblockA{\IEEEauthorrefmark{1}Technical University of Munich, Germany}
		\IEEEauthorblockA{\IEEEauthorrefmark{2}Institute for Information Transmission Problems, Russia }
	\IEEEauthorblockA{\IEEEauthorrefmark{3}
		Skolkovo Institute of Science and Technology, Russia}
	\IEEEauthorblockA{\IEEEauthorrefmark{4}
		Moscow Institute of Physics and Technology,  Russia}
	\IEEEauthorblockA{\textbf{Emails}: lukas.holzbaur@tum.de, rev-rina@yandex.ru, nikita.polyansky@gmail.com, vorobyev.i.v@yandex.ru}
	\thanks{L. Holzbaur's work was supported by the Technical University of Munich -- Institute for Advanced Study, funded by the German Excellence Initiative and European Union 7th Framework Programme under Grant Agreement No. 291763 and the German Research Foundation (Deutsche Forschungsgemeinschaft, DFG) under Grant No. WA3907/1-1. Rina Polyanskaya and Ilya Vorobyev were supported in part by the Russian Foundation for Basic Research through grant no.~\mbox{20-01-00559}. N. Polianskii's research was supported in part by a German Israeli Project Cooperation (DIP) grant under grant no.~KR3517/9-1.}
}
\IEEEoverridecommandlockouts
\maketitle
\begin{abstract}
	Guo, Kopparty and Sudan have initiated the study of error-correcting codes derived by lifting of affine-invariant codes. Lifted Reed-Solomon (RS) codes are defined as the evaluation of polynomials in a vector space over a field by requiring  their restriction to every line in the space to be a codeword of the RS code. In this paper, we investigate lifted RS codes and discuss their application to batch codes, a notion introduced in the context of private information retrieval and load-balancing in distributed storage systems. First, we improve the estimate of the code rate of lifted RS codes for lifting parameter $m\ge 3$ and large field size. Second, a new explicit construction of batch codes utilizing lifted RS codes is proposed. For some parameter regimes, our codes have a better trade-off between parameters than previously known batch codes.
	\end{abstract}
	\begin{IEEEkeywords}
		Lifting, batch codes, Reed-Solomon codes, distributed storage systems, disjoint recovering sets
	\end{IEEEkeywords}
\section{Introduction}
Modern distributed storage systems are commonly set up to provide a large number of users access to the data, where each user is free to request any file stored in the system. To avoid delays and bottlenecks in data delivery, it is desirable for the system being able to serve each set of requested files by distributing the load, i.e., the task of transmitting some of its stored data to the user, among the servers in the system. While replicating all files on each servers allows for a trivial manner of balancing this load, it entails a large storage overhead. On the other hand, the use of classical erasure codes, such as Reed-Solomon (RS) codes, can minimize this overhead, but generally doesn't provide an efficient method of load balancing. \emph{Batch codes} are a class of codes which aim to bridge this gap.
\subsection{Related work}
Batch codes were originally motivated by different applications such as load-balancing in storage and cryptographic protocols~\cite{ishai2004batch}.  Several explicit and non-explicit constructions of these codes have been proposed, employing methods based on generalizations of Reed-Muller (RM) codes \cite{ishai2004batch,polyanskaya2019batch}, unbalanced expanders \cite{ishai2004batch}, graph theory \cite{rawat2016batch}, array and multiplicity codes \cite{asi2018nearly}, and finite geometries~\cite{polyanskii2019constructions}. In this work, we consider a special notion of batch codes, namely \emph{primitive multiset batch codes} (for a more general study on the different notions of batch codes the reader is referred to \cite{skachek2018batch}).

Informally, a primitive multiset $k$-batch code (in what follows, we simply write a $k$-batch code to refer to this class of codes) of length $N$ and dimension $n$ allows for the recovery of any set of $k$ message symbols, possibly with repetition, in $k$ disjoint ways, i.e., for any $k$-tuple (\textit{batch}) of message symbols $x_{i_1},...,x_{i_k}$ with $i_1,...,i_k \in [n]$ there exist $k$ non-intersecting sets $R_1,...,R_k \subset [N]$ such that the message symbol $x_{i_j}$ can be recovered from the codeword symbols indexed by the set $R_j$. 
 For large $k=\Omega(n)$, batch codes are closely related to constant-query \textit{locally correctable codes} and it is known~\cite{katz2000efficiency,woodruff2012quadratic} that their rate approaches zero. On the other hand, when $k=O(1)$ is fixed, there exist explicit code constructions with the code rate very close to one~\cite{vardy2016constructions}. 
 
 Because of the above motivation, we classify batch codes by the required redundancy $r(n,k)\eqdef N-n$. In this paper, we will be concerned with the regime of sublinear $k$, i.e., $k=n^\epsilon$ with ${n \to \infty}$ and $0\le\epsilon\le 1$. We write $\epsilon^-$ if a statement holds for any $\epsilon^*$ with $0\leq \epsilon^* < \epsilon$. Several achievability results, i.e., upper bounds on the smallest achievable $r(n,k)$, have been shown. We summarize the results that provide the smallest $r(n,n^\epsilon)$ for the binary batch codes and some $\epsilon$:
\begin{itemize}
    \item[\cite{polyanskaya2019batch}] $r(n,n^{\epsilon^-}) = O(n^{\log_4(3)+(2-\log_2(3))\epsilon})$ for $0<\epsilon<\frac{1}{2}$,
    \item[\cite{asi2018nearly}] $r(n,n^{\epsilon^-}) =  O(n^{g(\epsilon)})$ for $0\leq \epsilon \leq 1$, where 
\begin{equation*}
    g(\epsilon) \eqdef \min_{b \in \mathbb{N}: b > \frac{2}{1-\epsilon}} \left[ 1- \frac{b(1-\epsilon)-2}{4b(b-1)} \right] \ ,
\end{equation*}
    \item[\cite{polyanskii2019constructions}] $r(n,n^{\epsilon^-}) = O(n^{\frac{3\epsilon+1}{2}})$ for  $0<\epsilon<1/3$.
\end{itemize} 
On the other hand, the only non-trivial converse bound on the redundancy, yielding that $r(n,3)=\Omega(\sqrt{n})$, was obtained  independently in~\cite{wootters2016linear} and~\cite{rao2016lower} for linear \textit{private information retrieval codes} and for \textit{codes with the disjoint repair group property}, concepts closely related to batch codes.
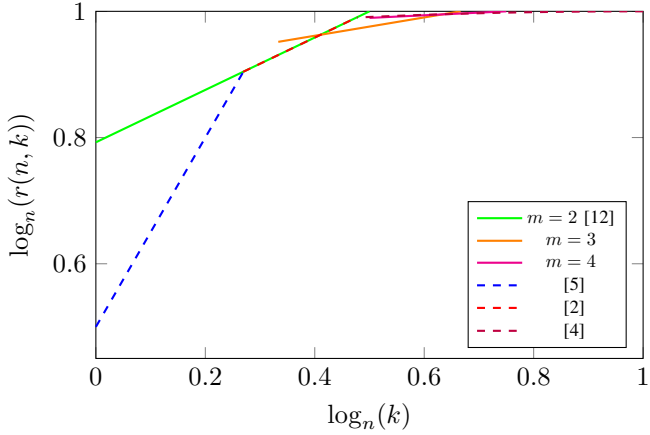
\begin{figure}
  \centering
  \begin{tikzpicture}
\pgfplotsset{compat = 1.3}
\begin{axis}[
	legend style={nodes={scale=0.7, transform shape}},
	width = \columnwidth,
	height = 0.7\columnwidth,
	xlabel = {$\log_n(k)$},
	xlabel style = {nodes={scale=0.8, transform shape}},
	ylabel = {$\log_n(r(n,k))$},
	ylabel style={nodes={scale=0.8, transform shape}},
	xmin = 0,
	xmax = 1,
	ymin = 0.45,
	ymax = 1,
	legend pos = south east]


\def\m{2}
\def\lambdaMax{3.0}
\addplot[color=green,
		domain = (\m-2)/\m:((\m-1)/\m),
		samples = 300,
		thick]
		{ln(\lambdaMax)/ln(2^\m)*(\m-1-\m*x)+\m*x-(\m-2)};
\addlegendentry{$m=2$ \cite{guo2013new}}

\def\m{3}
\def\lambdaMax{7.23606797749979}
\addplot[color=orange,
		domain = (\m-2)/\m:((\m-1)/\m),
		samples = 300,
		thick]
		{ln(\lambdaMax)/ln(2^\m)*(\m-1-\m*x)+\m*x-(\m-2)};
\addlegendentry{$m=3$}

\def\m{4}
\def\lambdaMax{15.543647468751944}
\addplot[color=magenta,
		domain = (\m-2)/\m:((\m-1)/\m),
		samples = 300,
		thick]
		{ln(\lambdaMax)/ln(2^\m)*(\m-1-\m*x)+\m*x-(\m-2)};
\addlegendentry{$m=4$}




\addplot[color=blue,
		domain = 0:0.269,
		samples = 300,
		dashed,
		thick]
		{1.5*x+0.5};
\addlegendentry{\cite{polyanskii2019constructions}}

\addplot[color=red,
		domain = 0.269:0.476,
		samples = 300,
		dashed,
		thick]
		{x*(2-ln(3)/ln(2))+ln(3)/ln(4)};
\addlegendentry{\cite{polyanskaya2019batch}}

\addplot[color=purple, mark=none,dashed,thick] table {upperBoundFile.txt};
\addlegendentry{\cite{asi2018nearly}}



\end{axis}
\end{tikzpicture}
  \caption{Comparison of parameters of binary batch codes based on $m$-variate lifts of the RS code for different values of $m$ with the upper bounds on the minimal redundancy of \cite{asi2018nearly,polyanskii2019constructions,polyanskaya2019batch}.}
  \label{fig:batch}
\end{figure}

\subsection{Our approach}
The main technique in this work is \emph{lifting} of codes, which was first studied in \cite{ben2011symmetric} in the context of LDPC codes and later employed to design locally correctable codes~\cite{guo2013new,guo2015high} and codes with the disjoint repair group property~\cite{li2019lifted,polyanskii2019lifted}.
Specifically, we construct batch codes from Reed-Solomon codes by lifting them to a higher dimension, while requiring the restriction of each codeword to a line to be a codeword of the RS code. This is shown~\cite{guo2013new} to be equivalent to generating a code by evaluating the polynomials in a vector space $\F_q^m$ from the linear span of all $m$-variate monomials, such that, when restricted to a line in the space, the resulting univariate polynomial is of  degree at most $d<q-1$.
An $m$-variate Reed-Muller (RM) code of order $d$ over a field $\F_q$ restricts the degree of the multivariate polynomials to be at most $d$ and thereby naturally provides this property. However, this causes the rate of the RM code to be very small. Lifted RS codes include not only the multivariate monomials of low degree, as RM codes do, but all polynomials which fulfill the required property. This gives a construction of codes with locality properties similar to RM codes, but of significantly higher rate. 
\subsection{Outline}
The remainder of the paper is organized as follows. In Section~\ref{ss::prelimiminaries}, we give rigorous definitions of lifted RS codes and batch codes and introduce several auxiliary notations. The rate of lifted RS codes can be determined by computing the fraction of so-called \emph{good monomials}, for which we will derive tight asymptotic formulas in Section~\ref{ss::lifted RS codes} and, thus, improve the result from~\cite{guo2013new}. In Section~\ref{ss::batchCodes}, we show that a lifted RS code is also an appropriate batch code, which gives us the best known upper bounds on the required redundancy $r(n,k)$ for  $k=n^\epsilon$ with $0.4< \epsilon < 0.6483$. We illustrate the trade-off between parameters of batch codes in Figure~\ref{fig:batch}.  Finally, we conclude with open problems in Section~\ref{ss::conclusion}.
\section{Preliminaries}\label{ss::prelimiminaries}
We start by introducing some notation that is used throughout the paper. Let $[n]$ be the set of integers from $1$ to $n$. A vector is denoted by bold lowercase letters such as $\d$. Let $q=2^{\ell}$ and $\F_q$ be a field of size $q$. We write $\log x$ to denote the logarithm of $x$ in base two. By $\Z_{\ge}$ and $\Z_{n}$  denote the set of non-negative integers and the set of integers from $0$ to $n-1$, respectively. In what follows, we fix $m$ to be a positive integer representing the number of variables. For $\d = (d_1,\dots, d_m)\in \Z_{q}^m$ and $\X=(X_1,\dots,X_m)$, let $\X^\d$ denote the monomial $\prod\limits{_{i=1}^m} X_i^{d_i}$ from $\F_q[\X]$. Let $\deg(\d)$ be the sum of components of $\d\in\Z_{\ge}^n$ and $|\d|$ be the number of non-zero components of $\d$. 

Let us define a partial order relation on $\Z_{q}$. We write $a\le_2 b$ if $a$ and $b$ can be represented by $a=\sum_{i=0}^{\ell-1} a^{(i)} 2^i$ and $b=\sum_{i=0}^{\ell-1} b^{(i)} 2^i$ with $a^{(i)},b^{(i)}\in\{0,1\}$ and $a^{(i)}\le b^{(i)}$ for all $i\in\{0,\dots, \ell-1\}$. We denote $a=(a^{(\ell-1)},...,a^{(0)})_2$.
For vectors $\d, \p\in \Z_{q}^m$, we write $\d\le_2 \p$ if $d_i\le_2 p_i$ for all $i\in[m]$. 

Define an operation $\Mods{q}$ that takes a non-negative integer and maps it to the element from $\Z_q$ as follows
$$
a \Mods{q}\eqdef
\begin{cases}
0,\,&\text{if   } a=0,\\
b\in [q-1],\,&\text{if }a\neq 0,\,a=b \Mod{q-1}.
\end{cases}
$$
It can be readily seen that if $a\,(\text{mod}^* q)=b$, then $T^{a} = T^{b} \Mod{T^q-T}$ in $\F_q[T]$.

For a function $f: \F_q^m \to \F_q$ and a set $S\subset \F^m_q$
let $f|_S$ denote the restriction of $f$ to the domain $S$. Abbreviate the set of all lines in $\F_q^m$ by
$$
\L_m\eqdef\left\{(\a T + \b)|_{T\in\F_q} \text{ for }\a,\b\in\F_q^m \right\}.
$$
We note that a multivariate polynomial restricted to a line is an univariate polynomial and the degree of the latter does not depend on the parameterization of the line. 

For a positive integer $d<q$, denote the set of univariate polynomials of degree less than $d$ by
$$
\cF_{d,q}\eqdef\{f(T)\in\F_q[T]:\,\,\deg(f)< d\}.
$$
\subsection{Lifted Reed-Solomon codes}
Let us recall the definition of lifted Reed-Solomon codes introduced in~\cite{guo2013new} in a more general form.
\begin{defn}[Lifted Reed-Solomon code, 
\cite{guo2013new}]\label{def::lifted RS code}
For an integer $m\ge 1$, the \textit{$m$-dimensional lift of the Reed-Solomon code} (or the \textit{$[m,d,q]$-lifted-RS code})  is the code
$$
\left\{(f(\a))|_{\a\in\F_q^m}:\,\,
\begin{aligned}
&f(\X)\in\F_q[\X]\text{ such that}\\
&\forall L\in \L_m:\,\,f|_L\in\cF_{d,q} 
\end{aligned}\right\}.
$$
\end{defn}
\begin{remark}
Note that the one-dimensional lift of a Reed-Solomon code represents the ordinary Reed-Solomon code of length $q$ and dimension $d$. Also, we observe that the $[m,d,q]$-lifted-RS code include all codewords of the $m$-variate RM code of order $d-1$ over $\F_q$.
\end{remark}
\begin{example}
Let $f(X_1,X_2)=X_1^2 X_2^2$. Then the $[2,3,4]$-lifted-RS code includes the codeword $\c=(f(a_1,a_2))|_{(a_1,a_2)\in\F_4^2}$ as for every line $L$, the degree of $f|_L$ is at most $2<3=d$. Indeed, given a line $L$ parameterized as $(\alpha_1 T +\beta_1, \alpha_2 T +\beta_2)|_{T\in \F_4}$ in $\F_4^2$, we have
\begin{align*}
f|_L&=f(\alpha_1 T +\beta_1, \alpha_2 T +\beta_2)=(\alpha_1 T +\beta_1)^2 (\alpha_2 T +\beta_2)^2\\
&\overset{(i)}{=}(\alpha_1^2 T^2 +\beta_1^2)(\alpha_2^2 T^2 +\beta_2^2)\\
&\overset{(ii)}{=} (\alpha_1^2\beta_2^2+\alpha_2^2\beta_1^2)T^2 
+\alpha_1^2\alpha_2^2T+\beta_1^2\beta_2^2,
\end{align*}
where in $(i)$ we used the property $2\alpha = 0$ for any $\alpha\in\F_4$, and $(ii)$ is implied by the fact that $T^4=T$ in $\F_4[T]$. On the other hand, the $2$-variate RM code of order $3$ doesn't contain $\c$ as the degree of $f$ is $4$, which is larger than $3$.
\end{example}
As shown in the example above, the characteristic of the field $\F_q$ can provide a gain in the number of good polynomials when compared with the RM code.

\begin{defn}[$d^*$-bad and good monomials] \label{def::bad * monomial}
Given a positive integer $d< q$, we say that a monomial $\X^\d$ with $\d\in\Z_q^m$ is \textit{$d^*$-bad} over $\F_q[\X]$ if there exists at least one $\i\in\Z_{q}^m$ such that $\i\le_2\d$ and $\deg(\i) \Mods{q} \in \{d,d+1,\dots,q-1\}$. A monomial is said to be \textit{$d^*$-good} if it is not $d^*$-bad.
\end{defn}
A characterization of lifting was established in~\cite{guo2013new}. We make use of this result for lifted Reed-Solomon codes. 
\begin{lemma}[Follows from~{\cite[Section 2]{guo2013new}}]\label{lem::lifted Reed-Solomon codes}
The $[m,q,d]$-lifted-RS code is equivalently defined as the evaluation of polynomials from the linear span of $d^*$-good monomials over $\F_q[\X]$. 
\end{lemma}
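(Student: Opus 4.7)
The plan is to make the equivalence between the two descriptions explicit by computing $\X^\d|_L$ term by term and matching coefficients. First, I would reduce any polynomial $f \in \F_q[\X]$ to one whose exponents lie in $\Z_q^m$ without changing the evaluation vector on $\F_q^m$: since $X_i^q = X_i$ as functions on $\F_q$, one may replace $X_i^q$ by $X_i$ freely, and write $f = \sum_{\d \in \Z_q^m} c_\d \X^\d$.

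Next, for a line $L = (\a T + \b)|_{T \in \F_q}$, I would expand $\X^\d|_L$ using Lucas' theorem in characteristic $2$, which gives $\binom{d_i}{i_i} \equiv 1 \Mod{2}$ if and only if $i_i \le_2 d_i$:
\begin{equation*}
\X^\d|_L = \prod_{i=1}^m (a_i T + b_i)^{d_i} = \sum_{\i \le_2 \d}\left(\prod_{i=1}^m a_i^{i_i} b_i^{d_i - i_i}\right) T^{\deg(\i)}.
\end{equation*}
Viewing $f|_L$ as a function on $\F_q$, we may then reduce $T^{\deg(\i)}$ to $T^{\deg(\i)\Mods{q}}$. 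Combining the contributions of all $\d$, the coefficient of $T^e$ in the reduced $f|_L$ is
\begin{equation*}
\sum_{\substack{\d,\,\i\,:\,\i \le_2 \d \\ \deg(\i)\Mods{q} = e}} c_\d \prod_{i=1}^m a_i^{i_i} b_i^{d_i - i_i}.
\end{equation*}

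The easy direction now follows by linearity: if every $\d$ with $c_\d \neq 0$ is $d^*$-good, then no $\i \le_2 \d$ has $\deg(\i)\Mods{q} \ge d$, so every power of $T$ surviving in $f|_L$ is strictly less than $d$. For the converse, assume $f|_L \in \cF_{d,q}$ for every $L \in \L_m$. Then for each $e \in \{d, d+1, \dots, q-1\}$ the coefficient above must vanish as a function on $\F_q^{2m}$. The exponents of each monomial $\prod_i a_i^{i_i} b_i^{d_i - i_i}$ lie in $\Z_q^{2m}$, and the map $(\d,\i)\mapsto(i_1,\dots,i_m,d_1-i_1,\dots,d_m-i_m)$ is injective, so distinct pairs $(\d,\i)$ contribute distinct monomials in $(\a,\b)$. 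Since polynomials with exponents in $\Z_q^{2m}$ biject with functions $\F_q^{2m}\to\F_q$, each $c_\d$ attached to such a monomial must vanish on its own. Ranging over $e \in \{d,\dots,q-1\}$ forces $c_\d = 0$ for every $d^*$-bad $\d$.

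The main obstacle is the bookkeeping that separates \emph{polynomial} identities from \emph{functional} identities: the bijection ``polynomials with exponents in $\Z_q^k$ $\leftrightarrow$ functions $\F_q^k\to\F_q$'' has to be invoked three times consistently (to reduce $f$, to reduce $f|_L$ modulo $T^q-T$, and to extract each $c_\d$ from the vanishing bivariate expression in $(\a,\b)$). Once this correspondence is tracked carefully and Lucas' theorem is applied, the combinatorial heart of the lemma—that the binary-dominance relation $\le_2$ captures exactly which univariate exponents can appear in the line-restriction—falls out essentially mechanically.
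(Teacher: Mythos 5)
Your proof is correct, and it actually establishes more than the paper writes down. The forward direction (every $d^*$-good monomial restricts to degree $<d$ on every line) is essentially identical to the computation the paper gives in its appendix: expand $(\a T+\b)^{\d}$, use Lucas' theorem in characteristic two to discard the terms with $\i\not\le_2\d$, and reduce modulo $T^q-T$. For the converse, however, the paper gives no proof at all — it defers to \cite{guo2013new}, whose argument runs through the general theory of affine-invariant codes (the lift is itself affine-invariant, hence spanned by monomials, and one then determines which monomials survive). Your converse replaces that machinery with a direct coefficient extraction: the coefficient of $T^e$ in the reduced restriction is a polynomial in $(\a,\b)$ with exponents in $\Z_q^{2m}$; distinct pairs $(\d,\i)$ contribute distinct monomials because the map $(\d,\i)\mapsto(\i,\d-\i)$ is injective and $\i\le_2\d$ forces $0\le i_j\le d_j\le q-1$, so all exponents stay in $\Z_q$; and since evaluation is a bijection between polynomials with exponents in $\Z_q^{2m}$ and functions $\F_q^{2m}\to\F_q$, vanishing of that coefficient for all $(\a,\b)$ kills each $c_\d$ individually, which is exactly the statement that every $d^*$-bad coefficient is zero. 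This is a legitimate, self-contained and more elementary route; its only cost relative to the cited argument is that it is tailored to the Reed--Solomon base code, whereas the affine-invariance approach of \cite{guo2013new} handles lifts of arbitrary affine-invariant codes. One point worth keeping explicit in a final write-up is the one you already flag: the hypothesis $f|_L\in\cF_{d,q}$ is a statement about the unique degree-$<q$ representative of the restricted function, which is precisely the reduced polynomial whose $T^e$-coefficients you set to zero, so no information is lost in the three polynomial-versus-function reductions.
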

We do not include the proof of this lemma here but some elaboration on the connection between lifted-RS codes and $d^*$-good monomials is given in the Appendix.
Lemma~\ref{lem::lifted Reed-Solomon codes} suggests a way to compute the dimension of the $[m,q,d]$-lifted-RS code, namely one needs to estimate the size of the set of $d^*$-good $m$-variate monomials over $\F_q[\X]$. We carry out a careful analysis on the latter in Section~\ref{ss::lifted RS codes}.
\subsection{Batch codes}
We now proceed with a thorough definition of batch codes.
	\begin{defn}[Batch code, 
	\cite{ishai2004batch}]\label{def::batch code}
		Let $F:\,\F_q^n\to\F_q^N$ be a map that encodes a string $x_1,\dots,x_n$ to $y_1,\dots, y_N$ and $\C$ be the image of $F$.
		The code $\C$ will be called a \textit{$k$-batch code} if for every multiset of symbols $\{x_{i_1},\dots, x_{i_k} \},$ ${i_j}\in	[n]$, there exist $k$ mutually disjoint sets $R_{1},\dots , R_{k}\subset[N]$ (referred to as \textit{recovering sets}) and functions $g_1,\dots, g_k$ such that for all $\y\in\C$ and for all $j\in[k]$, $g_j(\y|_{R_{j}})=x_{i_j}$, where $\y|_{R}$ is the projection of $\y$ onto coordinates indexed by $R$.
	\end{defn}
	A one-way connection between lifted RS codes and batch codes is shown in Section~\ref{ss::batchCodes}.
\section{Code rate of lifted RS codes}\label{ss::lifted RS codes}
In this section, we investigate the code dimension of lifted RS codes. For this purpose, we first introduce the concept of $(q-r)$-bad monomials (slightly different from $(q-r)^*$-bad monomials) and derive an explicit evaluation formula to count the number of such monomials when the parameter $r\le m$ is fixed and the field size $q=2^\ell$ is scaled. Second, we show how to use the evaluation formula to derive a bound on the number of $(q-r)^*$-bad monomials for arbitrary $r\le q$. Our estimate improves upon the result presented in~\cite[Sections 3.2, 3.4]{guo2013new} for $m\ge 3$ and is consistent with the result for $m=3$ provided in~\cite{polyanskii2019lifted}.  
\subsection{Computing  the number of $(q-r)$-bad monomials}
Let us introduce a terminology useful for establishing the number of $d^*$-bad monomials. Let $r\le \min(m,q)$ be a fixed positive integer.
\begin{defn}[$(q-r)$-bad monomial]\label{def:: q - r bad}
We say that a monomial $\X^\d$ with $\d\in\Z_q^m$ is \textit{$(q-r)$-bad} over $\F_q[\X]$ if there exists at least one $\i\in\Z_q^m$ such that $\i\le_2\d$ and $\deg(\i) \pmod{q}=(q-r)$. 
\end{defn}
\begin{remark}
The difference with Definition~\ref{def::bad * monomial} is, roughly speaking, in the modulo operation, namely $\pmod{q}$  is used in Definition~\ref{def:: q - r bad}, whereas $\pmod{q-1}$ is used in Definition~\ref{def::bad * monomial}.
\end{remark}
Let $S_j(\ell)$ denote the set of tuples $\d\in\Z_q^m$, $q=2^\ell$, for which there exists $\i\le_2\d$ with $\deg(\i)= (q-r) + jq = (2^{\ell}-r)+j2^{\ell}$ and $s_j(\ell)$ be the cardinality of $S_j(\ell)$. We note that $S_j(\ell)$ also depends on $r$, however, we omit this in our notion as we fix $r$ and scale only $\ell=\log q$. We provide an evaluation formula that does not depend on $r$ as well. Clearly, $s_{j}(\ell)=0$ for $j\ge m$ as the maximal $\deg(\i)$ over admissible $\i$ is $m(q-1)$ which is smaller than $(q-r)+mq$. Therefore, we aim to compute $\sum_{i=0}^{m-1}s_i(\ell)$ since the number of $(q-r)$-bad monomials over $\F_q$ is bounded by this value from one side and by $s_0(\ell)$ from the other side.

\begin{example}
For $q=4$, $r=1$ and $m=2$ the set $S_0(2)$ is 
\begin{align*}
\begin{smallmatrix}
    S_0(2) &= & \{ &(3, 0),&(2, 1),&(3, 1),&(1, 2),&(3, 2),&(0, 3),&(1, 3),&(2, 3),&(3, 3)& \} \\
    &&& \downarrow \hphantom{,} & \downarrow \hphantom{,}&\downarrow \hphantom{,}&\downarrow \hphantom{,}&\downarrow \hphantom{,}&\downarrow\hphantom{,} &\downarrow \hphantom{,}&\downarrow\hphantom{,} &\downarrow\hphantom{,} & \\
     &\mathbf{i} \ :& &(3, 0)\hphantom{,} & (2, 1)\hphantom{,} & (3, 0)\hphantom{,} &(1, 2)\hphantom{,} & (3, 0)\hphantom{,} & (0, 3)\hphantom{,} & (1, 2)\hphantom{,} & (2, 1)\hphantom{,} & (3, 0)& 
\end{smallmatrix} \ .
\end{align*}
It is easy to check that for any $\mathbf{d} \in S_0(2)$ and the corresponding $\mathbf{i}$ it holds that $\mathbf{i} \leq_2 \mathbf{d}$ and $\deg(\mathbf{i}) = (q-r) + jq = 3$. The cardinality of the set is $s_0(2) = |S_0(2)| = 9$. For these parameters the only $\mathbf{d}$ with $\deg(\mathbf{d}) \geq q-r=3$ that is not $(q-r)$-bad is $\mathbf{d}=(2,2)$.

\end{example}

Let $\binom{b}{\ge a}$ denote the number of ways to choose an (unordered) subset of at least $a$ elements from a fixed set of $b$ elements. For $a<0$ or $a>b$, we assume that $\binom{b}{a}=0$.
\begin{proposition}\label{pr::recurrent formula}
The system of recurrence relations
$$
\begin{pmatrix}
s_0(\ell+1)\\
s_1(\ell+1)\\
\vdots\\
s_j(\ell+1)\\
\vdots\\
s_{m-1}(\ell+1)
\end{pmatrix} = 
 A_m
\begin{pmatrix}
s_0(\ell)\\
s_1(\ell)\\
\vdots\\
s_j(\ell)\\
\vdots\\
s_{m-1}(\ell)
\end{pmatrix}
$$
holds true, where the square $m\times m$ matrix $A_m$ is given by
$$
A_m \eqdef\left(\begin{smallmatrix}
\binom{m}{\ge 1} & \binom{m}{ 0} & 0 & 0 & \dots & 0 \\
\binom{m}{\ge 3} & \binom{m}{2} &\binom{m}{1} & \binom{m}{0} & \dots & 0 \\
\vdots & \vdots & \vdots & \vdots & \ddots & \vdots \\
\binom{m}{\ge 2j+1} & \binom{m}{ 2j} & \binom{m}{ 2j-1} & \binom{m}{ 2j-2} & \dots & \binom{m}{ 2j-m+2}
 \\
\vdots & \vdots & \vdots & \vdots & \ddots & \vdots \\
\binom{m}{\ge 2m-1} & \binom{m}{2m-2} & \binom{m}{ 2m-3} & \binom{m}{ 2m-4} & \dots & \binom{m}{ m}
\end{smallmatrix}\right).
$$
\end{proposition}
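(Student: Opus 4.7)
The plan is induction on $\ell$ by peeling off the top bit. Every $\d\in\Z_{2q}^m$ decomposes uniquely as $\d=\tilde{\d}+q\boldsymbol{\delta}$ with $\tilde{\d}\in\Z_q^m$ and $\boldsymbol{\delta}\in\{0,1\}^m$, and a candidate $\i$ splits the same way as $\i=\tilde{\i}+q\boldsymbol{\epsilon}$. The order $\i\le_2\d$ decouples into $\tilde{\i}\le_2\tilde{\d}$ together with $\boldsymbol{\epsilon}\le\boldsymbol{\delta}$ componentwise, while $\deg(\i)=\deg(\tilde{\i})+q|\boldsymbol{\epsilon}|$. Writing $t=|\boldsymbol{\epsilon}|$, the required $\deg(\i)=(2q-r)+i\cdot 2q$ rearranges to $\deg(\tilde{\i})=(q-r)+(2i+1-t)q$, so $\d\in S_i(\ell+1)$ iff there exists $t\in\{0,\dots,|\boldsymbol{\delta}|\}$ such that $j':=2i+1-t\in\{0,\dots,m-1\}$ and $\tilde{\d}\in S_{j'}(\ell)$.

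The technical obstacle is that the sets $S_{j'}(\ell)$ can overlap, so a direct sum over the pairs $(t,\boldsymbol{\delta})$ would overcount. I would resolve this with a nesting lemma: $S_{j+1}(\ell)\subseteq S_j(\ell)$ for every $j\in\{0,\dots,m-2\}$. Granting nesting, the union $\bigcup_{j'\ge j_0}S_{j'}(\ell)$ collapses to $S_{j_0}(\ell)$, so for each $\boldsymbol{\delta}$ of weight $w=|\boldsymbol{\delta}|$ the number of eligible $\tilde{\d}$ equals $s_{\max(0,2i+1-w)}(\ell)$ when $\max(0,2i+1-w)\le m-1$ and zero otherwise. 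Summing $\binom{m}{w}$ over $w$ and substituting $j=2i+1-w$, the regime $w\ge 2i+1$ contributes $\binom{m}{\ge 2i+1}s_0(\ell)$, and the regime $2i+2-m\le w\le 2i$ contributes $\binom{m}{2i+1-j}s_j(\ell)$ for $j\in\{1,\dots,m-1\}$ (using the convention $\binom{m}{k}=0$ for $k<0$); this is exactly the $i$-th row of $A_m$.

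To prove the nesting lemma, I would encode $\tilde{\d}$ by its bit-weight vector $w_b=|\{k:d_k^{(b)}=1\}|\le m$ and observe that $\tilde{\d}\in S_j(\ell)$ iff $(j+1)q-r$ lies in the attainable set $A(\tilde{\d})=\{\sum_{b=0}^{\ell-1}2^b v_b:0\le v_b\le w_b\}$, since for any admissible row-weight vector one can select that many $1$-entries per row of the bit matrix of $\tilde{\d}$ to produce a bitwise-dominated $\tilde{\i}$. The nesting then reduces to the subtraction claim: if $X\in A(\tilde{\d})$ and $X\ge q$, then $X-q\in A(\tilde{\d})$. I would prove this by induction on $\ell$: writing $X=\sum 2^b x_b\ge 2^\ell$, if $x_{\ell-1}\ge 2$ I decrement $x_{\ell-1}$ by two; if $x_{\ell-1}=1$ I zero it out and strip another $2^{\ell-1}$ from the lower bits via the inductive hypothesis; and if $x_{\ell-1}=0$ I apply the IH twice to the lower-bit sum, which is $\ge 2^\ell$ initially and still $\ge 2^{\ell-1}$ after the first reduction, so the second application is legal.

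The main obstacle is precisely this nesting step; without it the recurrence would not close on $(s_0,\dots,s_{m-1})$ alone, and the iterated application of the IH in the $x_{\ell-1}=0$ case is where care is needed. Once nesting is in hand, the matrix identity is routine bookkeeping over $\boldsymbol{\delta}\in\{0,1\}^m$.
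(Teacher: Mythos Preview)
Your proposal is correct and follows essentially the same architecture as the paper: split $\d\in\Z_{2q}^m$ into its top-bit vector and its remainder, establish the nesting $S_{j+1}(\ell)\subseteq S_j(\ell)$, and then count over the $\binom{m}{w}$ choices of top-bit pattern to read off the rows of $A_m$. The paper packages the same ingredients as two lemmas (its Lemma~\ref{lem::decrease weight} is your nesting, its Lemma~\ref{lem::drop the leading bit} is your iff characterization of $S_i(\ell+1)$ in terms of $(\boldsymbol{\delta},\tilde{\d})$) and then checks the bijection $\bigsqcup_w T^{(w)}(j)=Pair(j)$.

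The one place you genuinely diverge is the proof of nesting. The paper argues it by a greedy MSB-to-LSB sweep that maintains a running deficit $\Delta$ and zeroes out ones until the target drop of $(j-l)2^\ell$ is met. You instead pass to the attainable-degree set $A(\tilde{\d})=\{\sum_b 2^b v_b:0\le v_b\le w_b\}$ and prove the cleaner ``subtraction closure'' $X\ge q\Rightarrow X-q\in A(\tilde{\d})$ by induction on $\ell$ with the three-way split on $x_{\ell-1}$. Both work; your formulation has the advantage that it isolates exactly the combinatorial statement needed and makes the double application of the inductive hypothesis in the $x_{\ell-1}=0$ case transparent, whereas the paper's sweep is more algorithmic but requires tracking that the deficit never goes negative before termination.
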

\begin{remark}
The proof of this technical statement can be found in the Appendix. 
 As a side note, this expression agrees with similar formulas for $m=2$ and $m=3$ mentioned in~\cite{guo2013new} and~\cite{polyanskii2019lifted}, respectively.  
\end{remark}
\begin{defn}[Largest eigenvalue $\lambda_m$]
Let $A_m$ be as in Proposition~\ref{pr::recurrent formula} and $\Lambda$ be the set of its eigenvalues. We define $\lambda_m$ to be the largest element from $\Lambda$.
\end{defn}
It is well known that the eigenvalues of a matrix are upper and lower bounded by the largest and smallest sum of its rows or columns, respectively. It follows directly from the structure of $A_m$ that $2^{m-1}\leq \lambda_m \leq 2^m$.
For the readers convenience, we provide $\lambda_m$ and $m-\log\lambda_m$ for $2\leq m \leq 9$ in Table~\ref{tab::eigenvalues}.

\begin{table}
    \centering
    \caption{The largest eigenvalue $\lambda_m$ of $A_m$, the resulting convergence rate $m-\log(\lambda_m)$ derived in this work, and the convergence rate $p_m$ of \cite{guo2013new} for different values of $m$.}
    \begin{tabular}{CCCC}
    m & \lambda_m & m-\log(\lambda_m) & p_m\\ \hline
2 & 3.0000 & 4.1504 \times 10^{-1}& 4.1504 \times 10^{-1}\\
3 & 7.2361 & 1.4479 \times 10^{-1}& 1.1360 \times 10^{-2}\\
4 & 15.5436 & 4.1747 \times 10^{-2} & 2.8233 \times 10^{-3}\\
5 & 31.7877 & 9.6043 \times 10^{-3} & 4.6986 \times 10^{-4}\\
6 & 63.9217 & 1.7653 \times 10^{-3}& 1.1742 \times 10^{-4}\\
7 & 127.9763 & 2.6714\times 10^{-4}& 2.9353\times 10^{-5} \\
8 & 255.9939 & 3.4467 \times 10^{-5}& 2.8664 \times 10^{-8} \\
9 & 511.9986 & 3.8959 \times 10^{-6} & 2.6872 \times 10^{-9}
    \end{tabular}
    \label{tab::eigenvalues}
\end{table}

Note that the order of $s_j(\ell)$ is the maximum value in the matrix $A_m^\ell$, the $\ell$th power of $A_m$.  The exponential growth rate of the matrix powers $A_m^\ell$ as $\ell\to\infty$ is controlled by $\lambda_m^\ell$. Since all elements of $A_m^{m-1}$ are positive (except the $m$th row which has all zeros but the last entry), the matrix $A_m$ has only one eigenvalue of maximum modulus by Perron-Frobenius theorem for non-negative matrices (e.g., see~\cite[Theorem 8.5.2]{horn2012matrix}). Finally, we obtain the following statement.
\begin{corollary}\label{cor::number of bad monomials}
For an integer $r\le m$, the number of $(q-r)$-bad monomials is $\Theta(\lambda_m^{\ell})=\Theta(q^{\log \lambda_m})$ as $q\to\infty$.
\end{corollary}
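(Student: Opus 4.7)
The plan is to sandwich the count of $(q-r)$-bad monomials between two quantities that both scale like $\lambda_m^\ell$, and then close the gap using Proposition~\ref{pr::recurrent formula} together with the Perron--Frobenius theorem.

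First I would check that a monomial $\X^\d$ is $(q-r)$-bad iff there is $\i\le_2\d$ with $\deg(\i)=(q-r)+jq$ for some $j\in\{0,1,\dots,m-1\}$, where the upper bound on $j$ follows from $\deg(\i)\le m(q-1)<mq$. The set of $(q-r)$-bad monomials is therefore exactly $\bigcup_{j=0}^{m-1}S_j(\ell)$, and a trivial union bound gives
\begin{equation*}
  s_0(\ell)\;\le\;\#\{(q-r)\text{-bad monomials}\}\;\le\;\sum_{j=0}^{m-1}s_j(\ell).
\end{equation*}
It therefore suffices to prove that both $s_0(\ell)$ and $\sum_{j=0}^{m-1}s_j(\ell)$ are $\Theta(\lambda_m^\ell)$.

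Next I iterate Proposition~\ref{pr::recurrent formula} to write $\mathbf{s}(\ell)=A_m^{\ell-\ell_0}\mathbf{s}(\ell_0)$. The last row of $A_m$ is $(0,\dots,0,1)$, so the characteristic polynomial splits as $(\lambda-1)\det(\lambda I-B)$, where $B$ is the leading $(m-1)\times(m-1)$ principal submatrix of $A_m$. As noted in the paragraph preceding the corollary, $B^{m-1}$ has all positive entries, so $B$ is a primitive nonnegative matrix. The Perron--Frobenius theorem then yields a simple strictly dominant eigenvalue $\lambda_B>0$ with strictly positive left eigenvector $\mathbf{u}$ and right eigenvector $\mathbf{v}$, and $\|B^\ell\|_\infty=\Theta(\lambda_B^\ell)$. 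Since $\lambda_m\ge 2^{m-1}\ge 2>1$, the spurious eigenvalue $1$ cannot be dominant, and hence $\lambda_m=\lambda_B$.

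For the lower bound I would pick $\ell_0$ with $2^{\ell_0}>r$; then $(2^{\ell_0}-r,0,\dots,0)\in S_0(\ell_0)$ shows $s_0(\ell_0)\ge 1$. Let $\tilde{\mathbf{s}}(\ell)=(s_0(\ell),\dots,s_{m-2}(\ell))^\top$, so that the recurrence becomes $\tilde{\mathbf{s}}(\ell+1)=B\,\tilde{\mathbf{s}}(\ell)+\c\,s_{m-1}(\ell)$ with $\c,s_{m-1}(\ell)\ge 0$; hence $\tilde{\mathbf{s}}(\ell)\ge B^{\ell-\ell_0}\tilde{\mathbf{s}}(\ell_0)$ entrywise. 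Pairing this inequality with the positive Perron left eigenvector $\mathbf{u}$ and using $s_0(\ell_0)>0$ gives $s_0(\ell)=\Omega(\lambda_m^\ell)$. The matching upper bound $\sum_{j=0}^{m-1}s_j(\ell)=O(\lambda_m^\ell)$ follows from $\|B^\ell\|=O(\lambda_m^\ell)$, $\|\sum_{k=0}^\ell B^k\|=O(\lambda_m^\ell)$ (since $\lambda_m>1$), and the observation that $s_{m-1}(\ell)$ is constant in $\ell$ (it equals $0$ if $r<m$, and $1$ if $r=m$, because $\deg(\i)=mq-r\le m(q-1)$ forces $r\ge m$ and $\i=(q-1,\dots,q-1)$). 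Rewriting $\lambda_m^\ell=q^{\log\lambda_m}$ then completes the proof.

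The main obstacle is that $A_m$ itself is not primitive --- the last standard basis vector is fixed by $A_m$ and contributes the spurious eigenvalue $1$ --- so Perron--Frobenius cannot be invoked on $A_m$ directly. The fix is to peel this trivial one-dimensional block off, do the spectral analysis on the primitive submatrix $B$, and check that the ``seed'' $s_0(\ell_0)>0$ supplies a nonzero projection onto the Perron eigenspace, which is what transports the exponential growth $\lambda_m^\ell$ into both ends of the sandwich.
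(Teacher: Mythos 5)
Your argument is correct and follows the same core strategy as the paper -- iterate the recurrence of Proposition~\ref{pr::recurrent formula} and extract the growth rate $\lambda_m^{\ell}$ via Perron--Frobenius -- but it is noticeably more complete than what the paper actually writes down. The paper's justification is the informal paragraph preceding the corollary: it invokes Perron--Frobenius on $A_m$ itself, with only a parenthetical acknowledgement that $A_m^{m-1}$ is not in fact positive (its last row is $(0,\dots,0,1)$), and it never addresses the $\Omega$-direction at all. You repair both points: the block-triangular splitting of $A_m$ into the leading $(m-1)\times(m-1)$ block $B$, a nonnegative column, a zero row, and the lower-right entry $1$ isolates the spurious eigenvalue $1$, lets Perron--Frobenius be applied legitimately to the primitive matrix $B$, and identifies $\lambda_m=\rho(B)$ because $\lambda_m\ge 2^{m-1}>1$; and the seed $s_0(\ell_0)\ge 1$ paired with the strictly positive left Perron eigenvector supplies the nonzero projection onto the Perron direction that the lower bound needs and that the paper simply asserts by writing $\Theta(\cdot)$. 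Your handling of the inhomogeneous term is also sound, since $s_{m-1}(\ell)\in\{0,1\}$ is indeed constant in $\ell$ and $\lambda_m>1$ makes the geometric sum harmless. One small point: in passing from $\mathbf{u}^{\top}\tilde{\mathbf{s}}(\ell)=\Omega(\lambda_m^{\ell})$ to $s_0(\ell)=\Omega(\lambda_m^{\ell})$ you implicitly use that $s_j(\ell)\le s_0(\ell)$ for all $j$, which is exactly Lemma~\ref{lem::decrease weight} (it gives $S_j(\ell)\subseteq S_0(\ell)$); cite it explicitly, although the number of $(q-r)$-bad monomials is in any case at least $\max_j s_j(\ell)$, so your conclusion stands either way.
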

\subsection{Computing  the number of $(q-r)^*$-bad monomials}
Now let $r\le q$ (the restriction $r\le m$ is no longer necessary, i.e., $r$ could be very large). By Definition~\ref{def::bad * monomial}, a monomial $\X^\d$ is $(q-r)^*$-bad if there exists an $\i\in\Z_{q}^m$ such that $\i\le_2 \d$ and $\deg(\i) \Mods{q} \in \{q-r,q-r+1,\dots,q-1\}$. The latter condition is equivalent to 
$$
\deg(\i)= q-r_0 + (q-1)j=(q-r_0-j) + qj
$$
for some $r_0\in[r]$ and $j\in \Z_{m}$. Let us drop the $\lceil\log(r+m)\rceil$ least significant bits in every component of $\d$ and $\i$ to obtain some $\d'$ and $\i'$ from $\Z_{q'}^m$ with $q'=2^{\ell'}$ and $\ell'=\ell-\lceil\log(r+m)\rceil$. Then we have that $\i'\le_2 \d'$ and 
$$
(q'-m)+jq'\le \deg(\i')\le \lfloor\deg(\i)/2^{\ell-\ell'}\rfloor\le (q'-1)+jq'.
$$
Therefore, by Definition~\ref{def:: q - r bad}, we have that $\X^{\d'}$ is $(q'-r')$-bad over $\F_{q'}[\X]$ for some positive integer $r'\le m$. By simple counting arguments and Corollary~\ref{cor::number of bad monomials}, the following statement is implied.
\begin{corollary}\label{cor:: number of bad * monomials}
For an integer $r<q=2^{\ell}$, the number of $(q-r)^*$-bad monomials is $\Theta(r^{m-\log\lambda_m} q^{\log \lambda_m})$ as $\ell\to\infty$.
\end{corollary}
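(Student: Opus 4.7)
The plan is to reduce counting $(q-r)^*$-bad monomials over $\F_q$ to the setting of Corollary~\ref{cor::number of bad monomials}, in which the offset is already bounded by $m$. The reduction, already sketched in the paragraph preceding the statement, is the bit-truncation map that drops the $\lceil\log(r+m)\rceil$ least significant bits of each coordinate of $\d\in\Z_q^m$, yielding $\d'\in\Z_{q'}^m$ with $q'=q/K$ and $K=2^{\lceil\log(r+m)\rceil}$. The key observation is that if $\X^\d$ is $(q-r)^*$-bad with witness $\i$ satisfying $\deg(\i)=(q-r_0-j)+qj$, then the truncated witness $\i'$ obeys $\i'\le_2\d'$ and $(q'-m)+jq'\le\deg(\i')\le(q'-1)+jq'$, so $\X^{\d'}$ is $(q'-r')$-bad in the sense of Definition~\ref{def:: q - r bad} for some $r'\in[m]$. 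The choice $K\ge r+m$ is precisely what forces every remainder $r_0+j\le r+m-1<K$ to squeeze into a single residue class modulo $q'$ after truncation.

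For the upper bound, I would note that truncation is exactly $K^m$-to-one and apply a union bound over $r'\in[m]$ together with Corollary~\ref{cor::number of bad monomials}:
\[
\#\{(q-r)^*\text{-bad}\}\le K^m\sum_{r'=1}^m \#\{(q'-r')\text{-bad over }\F_{q'}\}=O\bigl(K^m\cdot q'^{\log\lambda_m}\bigr).
\]
Substituting $q'=q/K$ and $K=\Theta(r+m)=\Theta(r)$ collapses the bound to $O(K^{m-\log\lambda_m}q^{\log\lambda_m})=O(r^{m-\log\lambda_m}q^{\log\lambda_m})$.

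For the matching lower bound, I would give an explicit construction. Pick any $(q'-1)$-bad $\d'\in\Z_{q'}^m$ (there are $\Theta(q'^{\log\lambda_m})$ such, by Corollary~\ref{cor::number of bad monomials}) and fix a witness $\i'\le_2\d'$ with $\deg(\i')=q'-1$. For each $\w=(w_1,\dots,w_m)$ with $w_1\in[K-r,K-1]$ and $w_t\in\{0,\dots,K-1\}$ for $t\ge 2$, set $\d=K\d'+\w$ and $\i=K\i'+(w_1,0,\dots,0)$. A direct check shows $\i\le_2\d$ (high bits inherit from $\i'\le_2\d'$, low bits either coincide or are $0$) and $\deg(\i)=K(q'-1)+w_1=q-K+w_1\in[q-r,q-1]$, so $\X^\d$ is $(q-r)^*$-bad. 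Since the pair $(\d',\w)$ is recovered uniquely from $\d$ by truncation, this yields $\Omega(rK^{m-1}\cdot q'^{\log\lambda_m})=\Omega(r^{m-\log\lambda_m}q^{\log\lambda_m})$ distinct bad monomials, matching the upper bound.

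The main obstacle is bookkeeping: threading the bitwise $\le_2$ relation and the degree identity through the truncation map without losing or gaining witnesses. The only conceptual step is the choice of $K$, which must be large enough ($K\ge r+m$) to collapse all star-bad residue patterns onto the simpler $(q'-r')$-bad setting of Corollary~\ref{cor::number of bad monomials}, yet small enough ($K=\Theta(r)$) that the preimage factor $K^m$ produces exactly the desired exponent $m-\log\lambda_m$ on $r$.
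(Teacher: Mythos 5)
Your argument is correct and follows essentially the same route as the paper: the upper bound via the $K^m$-to-one truncation map combined with Corollary~\ref{cor::number of bad monomials}, and the lower bound by appending low-order bits to $(q'-1)$-bad monomials so that the witness degree lands in $[q-r,q-1]$. The only difference is cosmetic -- you enumerate $w_1\in\{K-r,\dots,K-1\}$ and arbitrary low bits in the remaining coordinates, giving $rK^{m-1}$ extensions per $\d'$, whereas the paper pads with a fixed all-ones prefix and varies $\lfloor\log r\rfloor$ free bits per coordinate, giving $2^{m\lfloor\log r\rfloor}$ extensions; both counts are $\Theta(r^m)$ for fixed $m$ and yield the same lower bound.
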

\begin{proof}[Proof of Corollary~\ref{cor:: number of bad * monomials}]
The number of $(q-r)^*$-bad monomials can be bounded by the number of $(q'-r')$-bad monomials with $r'\le m$ multiplied by the number of ways to choose $m\lceil \log(r+m)\rceil$ bits. By Corollary~\ref{cor::number of bad monomials}, it can be estimated as 
$$
m2^{m}(r+m)^m O\left({q'}^{\log \lambda_m}\right)=O\left(r^{m-\log\lambda_m} q^{\log \lambda_m}\right),
$$
where the factor $m$ comes from the number of choices for the parameter $r'\in [m]$ and $2^m(r+m)^m\ge 2^{m\lceil\log(r+m)\rceil}$ is the number of ways to choose $m\lceil \log(r+m)\rceil$ bits.  

Now let us elaborate on showing that the number of $(q-r)^*$-bad monomials is $\Omega\left(r^{m-\log\lambda_m} q^{\log \lambda_m}\right)$. Take all $(q'-1)$-bad monomials $\X^{\d'}$ over $\F_{q'}[\X]$ with the property that there exists $\i'\le_2 \d'$ such that $\deg(\i')=q'-1$. 
By Proposition~\ref{pr::recurrent formula} and Corollary~\ref{cor::number of bad monomials}, the number of such monomials can be bounded as $\Omega(q'^{\log \lambda_m})$. Define
$$
\ell_0:=\lceil\log(m+r)\rceil-\lfloor \log r\rfloor.
$$
Then we concatenate every component $d'_j$ of $\d'=(d'_1,\ldots,d'_m)$  with the all-one string of length $\ell_0$
and an arbitrary binary string of length $\lfloor \log r\rfloor$. The total number of obtained tuples $\d\in \Z_q^m$ is then
$$
2^{m\lfloor \log r \rfloor}\Omega\left(q'^{\log \lambda_m}\right)=\Omega\left(r^{m-\log\lambda_m} q^{\log \lambda_m}\right).
$$
For every resulting tuple $\d$, the monomial $\X^\d$ is also $(q-r)^*$-bad over $\F_q[\X]$. Indeed, we can construct an appropriate $\i$ based on $\i'$. To see this, we concatenate every component $i'_j$ (except $i'_1$) with the all-zero string of length $\lceil\log(r+m)\rceil$, and $i'_1$ with the all-one string of length  $\ell_0$ and the all-zero string of length $\lfloor \log r\rfloor$.
Then we have $\i\le_2\d$ and $\deg(\i)$ can be easily bounded as $q-r\le\deg(\i)\le q -1$.
This completes the proof.
\end{proof}

\begin{example}
Consider the parameters $q' =2^{\ell'} = 4$, $m=2$, $r=2$, and $q = 2^{\ell' + \lceil\log(r+m)\rceil} = 16$. As shown in the previous example, we have $\mathbf{d}' = (1,3) \in S_0(\ell')$ with $\mathbf{i}' \leq_2 \mathbf{d}'$ for $\mathbf{i}' = (1,2)$. The binary representations of $\mathbf{d}'$ and $\mathbf{i}'$ are given by 
\begin{align*}
    \mathbf{d}' &= (01,11)_2\\
    \mathbf{i}' &= (01,10)_2
\end{align*}
Concatenating the all-one string of length $\ell_0 =\lceil\log(m+r)\rceil-\lfloor \log r\rfloor  =1$ followed by arbitrary strings of length $\lfloor \log r\rfloor = 1$ to the components of $\mathbf{d}'$ gives the tuples
\begin{align*}
    \mathbf{d}_1 &= (0110,1110)_2\\
    \mathbf{d}_2 &= (0110,1111)_2\\
    \mathbf{d}_3 &= (0111,1110)_2\\
    \mathbf{d}_4 &= (0111,1111)_2 \ .
\end{align*}
The $\mathbf{i}$ such that $\mathbf{i}\leq \mathbf{d}_j, \ j=1,2,3,4$ can be found by concatenating every component $i_j'$ except for $i_1'$ with $\lceil \log(r+m) \rceil= 2 $ zeros and $i_1$ with $\ell_0 = 1$ one and $\lfloor \log r\rfloor = 1$ zero, to obtain
\begin{align*}
    \mathbf{i} = (0110,1000)_2 \ .
\end{align*}
The degree of $\mathbf{i}$ is $\deg(\mathbf{i}) = 14 \geq q-r$.
\end{example}

\subsection{Code rate of lifted RS codes}
\begin{theorem}\label{th:: number of bad monomials}
The rate of the $[m,q,q-r]$-lifted-RS code is
$$
1-\Theta\left((q/r)^{\log \lambda_m-m}\right)\quad \text{as } q\to\infty.
$$
\end{theorem}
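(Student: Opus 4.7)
The plan is to combine Lemma~\ref{lem::lifted Reed-Solomon codes} with the asymptotic estimate of Corollary~\ref{cor:: number of bad * monomials}. By Lemma~\ref{lem::lifted Reed-Solomon codes}, the $[m,q,q-r]$-lifted-RS code is the evaluation (on $\F_q^m$) of the linear span of the $(q-r)^*$-good monomials. Since each such monomial $\X^{\d}$ has $\d\in\Z_q^m$, all of its partial degrees lie in $\{0,\dots,q-1\}$, and distinct monomials of this form descend to distinct functions on $\F_q^m$ (they form a basis of $\F_q[X_1,\dots,X_m]/(X_1^q-X_1,\dots,X_m^q-X_m)$). Consequently, the dimension of the code equals the number of $(q-r)^*$-good monomials in $\Z_q^m$, and the length is $q^m$.

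Writing the dimension as the total count of monomials minus the count of bad ones gives
$$
\mathrm{rate} \;=\; 1 - \frac{\#\{(q-r)^*\text{-bad monomials in }\Z_q^m\}}{q^m}.
$$
Next I would invoke Corollary~\ref{cor:: number of bad * monomials}, which states that the numerator is $\Theta(r^{m-\log\lambda_m}\,q^{\log\lambda_m})$. Plugging this in and simplifying,
$$
\frac{\Theta(r^{m-\log\lambda_m}q^{\log\lambda_m})}{q^m} \;=\; \Theta\!\left(\left(\tfrac{r}{q}\right)^{m-\log\lambda_m}\right) \;=\; \Theta\!\left(\left(\tfrac{q}{r}\right)^{\log\lambda_m-m}\right),
$$
where the two forms are equivalent because, by Table~\ref{tab::eigenvalues} (or by the Perron--Frobenius argument already used), $\lambda_m<2^m$ strictly for all $m\ge 2$, so the sign of the exponent is unambiguous and the fraction tends to $0$. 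Subtracting from $1$ yields the claimed expression.

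Since the heavy lifting—the recursion of Proposition~\ref{pr::recurrent formula}, the Perron--Frobenius extraction of $\lambda_m$, and the bit-truncation/extension argument that reduces $(q-r)^*$-bad monomials to $(q'-r')$-bad ones—has already been performed, this final theorem is essentially a bookkeeping consequence. The only mildly non-routine point is the injectivity of the evaluation map on the span of monomials with partial degrees below $q$; but this is a standard fact about polynomial functions over $\F_q$ and can be dispatched in a single sentence. I do not anticipate any genuine obstacle beyond keeping the exponents bookkept correctly when converting between $r/q$ and $q/r$.
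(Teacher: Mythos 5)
Your proposal is correct and follows essentially the same route as the paper: invoke Lemma~\ref{lem::lifted Reed-Solomon codes} to reduce the rate computation to counting $(q-r)^*$-good monomials, then apply Corollary~\ref{cor:: number of bad * monomials} and simplify the exponent. The extra remark on injectivity of the evaluation map is a harmless addition that the paper leaves implicit in its characterization lemma.
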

It is clear that the rate approaches $1$ for $r=o(q)$ as $\lambda_m<2^m$. This fact was also proved in~\cite{guo2013new} in order to show the existence of high rate locally correctable codes with sublinear locality. Let us illustrate the improvement of Theorem~\ref{th:: number of bad monomials} compared to the result from~\cite{guo2013new}. We take $r=O(1)$ and check that the convergence rate of our estimate is $1-\Theta\left(q^{\log \lambda_m - m}\right)$. 
The arguments from~\cite{guo2013new} show that for $m\ge 2$, the rate is
$$
1- O\left(\left(1-2^{-m\lceil\log m\rceil}\right)^{\log q/\lceil\log m\rceil}\right)= 1-O(q^{-p_{m}}),
$$
where $p_m\eqdef -\log\left(1-2^{-m\lceil\log m\rceil}\right)/\lceil\log m\rceil$. In Table~\ref{tab::eigenvalues}, we depict some values of $m-\log \lambda_m$ and $p_m$ for $2\le m\le 9$.
\begin{proof}[Proof of Theorem~\ref{th:: number of bad monomials}]
Lemma~\ref{lem::lifted Reed-Solomon codes} provides a way to estimate the code rate of $[m,q-r,q]$-lifted-RS codes by computing the fraction of $(q-r)^*$-good monomials. By Corollary~\ref{cor:: number of bad * monomials}, the rate is
$$
1 - \Theta\left(r^{m-\log\lambda_m} q^{\log \lambda_m}\right)/q^m=1-\Theta\left((q/r)^{\log \lambda_m-m}\right)
$$
as $q\to\infty$. This completes the proof.
\end{proof}

\section{Batch codes based on lifted RS codes}\label{ss::batchCodes}
In this section, a new construction of binary batch codes is presented. To this end, we first provide a construction of non-binary $k$-batch codes of length $n$ based on the $m$-dimensional lifts of an RS code. After that,  we compute the parameters of this construction in the asymptotic regime for the availability parameter $k=n^{\varepsilon}$ with real $\varepsilon\in[\frac{m-2}{m}, \frac{m-1}{m}]$. Finally, we show how to convert this construction into a binary batch code.
\begin{theorem}\label{th:: nonbinary batch code}
Fix integers $q$, $m$ and $r<q$. The $[m,q-r,q]$-lifted-RS code has the following properties:
		\begin{enumerate}
			\item The length of the code is $q^m$.
			\item The rate of the code is $1-\Theta\left((q/r)^{\log \lambda_m-m}\right)$ as $q\to\infty$.
			\item The code is a $k$-batch code for $k=q^{m-2}r$.
		\end{enumerate}
\end{theorem}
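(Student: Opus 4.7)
Claims (1) and (2) are immediate: the block length equals $|\F_q^m|=q^m$ by Definition~\ref{def::lifted RS code}, and the rate formula is exactly Theorem~\ref{th:: number of bad monomials} applied with $d=q-r$. The real work is (3), which I would prove by a greedy line-based recovery scheme.

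After fixing any systematic encoder, the multiset $\{x_{i_1},\dots,x_{i_k}\}$ corresponds to a multiset $\{\a_{i_1},\dots,\a_{i_k}\}\subset\F_q^m$ of codeword coordinates, and it suffices to produce pairwise disjoint sets $R_1,\dots,R_k\subset\F_q^m$ such that the codeword value at $\a_{i_j}$ is a function of its restriction to $R_j$. I would process the requests in arbitrary order and maintain $U_{j-1}\eqdef R_1\cup\dots\cup R_{j-1}$. At step $j$ I pick any line $L_j\in\L_m$ through $\a_{i_j}$ with $|L_j\cap(U_{j-1}\setminus\{\a_{i_j}\})|\le r-1$ and set $R_j$ to be any $q-r$ points of $L_j\setminus(\{\a_{i_j}\}\cup U_{j-1})$. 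For every codeword $(f(\a))_{\a\in\F_q^m}$, Definition~\ref{def::lifted RS code} forces $f|_{L_j}\in\cF_{q-r,q}$; since Reed--Solomon codes are MDS, the $q-r$ values on $R_j$ uniquely determine $f|_{L_j}$, and evaluating the interpolated polynomial at $\a_{i_j}$ defines the required decoding function $g_j$. Disjointness of the $R_j$ is built into the construction.

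The feasibility of each greedy step reduces to a counting argument. Through $\a_{i_j}$ pass exactly $N\eqdef(q^m-1)/(q-1)$ lines, and each point of $U_{j-1}\setminus\{\a_{i_j}\}$ lies on precisely one of them; hence at most $|U_{j-1}|/r$ of those lines contain $r$ or more points of $U_{j-1}\setminus\{\a_{i_j}\}$. Using $|U_{j-1}|\le(j-1)(q-r)\le(k-1)(q-r)<q^{m-2}r(q-r)$, the number of admissible lines is at least
\[
N-q^{m-2}(q-r)=rq^{m-2}+\sum_{i=0}^{m-2}q^{i}>0,
\]
so such an $L_j$ exists; since $|L_j\setminus\{\a_{i_j}\}|=q-1$ and at most $r-1$ of these points lie in $U_{j-1}$, there remain at least $q-r$ points from which to carve out $R_j$.

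I do not foresee a serious obstacle. The only mild subtlety is the multiset nature of the request: when $\a_{i_j}=\a_{i_{j'}}$ for $j\neq j'$, the greedy must allocate distinct lines through the common coordinate, but the bound above is uniform in $j\le k$ and distinct lines through the same point automatically yield disjoint recovering sets after one removes the shared center. In essence, the point-line incidence structure of $\F_q^m$ combined with the MDS property of the univariate RS code on each line supplies an excess of almost-disjoint local recovery groups, and the intersection budget $r-1$ is exactly enough to absorb the worst-case overlap from $q^{m-2}r$ simultaneous requests.
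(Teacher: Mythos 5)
Your proposal is correct and follows essentially the same route as the paper: a greedy/inductive choice of a line through each requested coordinate, a counting argument showing some line through that point meets the previously used points in fewer than $r$ positions, and interpolation of the degree-$<q-r$ restriction from $q-r$ unused evaluations. The only cosmetic differences are that you bound the number of ``bad'' lines via a Markov-type count of lines containing at least $r$ used points (the paper averages $qk'$ used points over more than $q^{m-1}$ lines to reach the same conclusion) and that you make the multiset case explicit.
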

\begin{proof}[Proof of Theorem~\ref{th:: nonbinary batch code}]
The first property follows from Definition~\ref{def::lifted RS code}. The second property is implied by Theorem~\ref{th:: number of bad monomials}. 

To prove the third property, we first note that a lifted RS code is a linear code over $\F_q$ and it can be encoded systematically. Let $\y$ be a codeword of the $[m,d,q]$-lifted-RS code. Since every coordinate of $\y$ is simply the evaluation $f(\a)$ for some $\a\in\F_q^m$, we can index coordinates of our code by elements $\a$ from $\F_q^m$. 

Now we shall prove a slightly stricter condition than required for $k$-batch codes, namely for every multiset of symbols $\{y_{\a_1},\dots,y_{\a_k}\}$, there exist mutually disjoint sets $R_1,\dots, R_k\subset \F_q^m$ and some functions $g_1,\dots,g_k$ such that $y_{\a_i}=g_i(\y|_{R_i})$. 
Let us prove the existence of $R_1,\dots,R_k$ by using the inductive procedure described below. 

To reconstruct $y_{\a_1}$, we take an arbitrary line $L_1$ in $\F_q^m$ containing $\a_1$ and let $R_1=L_1\setminus \{\a_1\}$. As the restriction of polynomial $f$ to a line $L_1$ has degree less than $q-r$ by definition of lifted RS codes, we can interpolate $f|_{L_1}$ by reading evaluations of $f$ at some $q-r$ points on the line $L_1$ and evaluate $f|_{L_1}$ at point $\a_1$. Suppose that for $k'<k$, symbols $\{y_{\a_1},\dots,y_{\a_{k'}}\}$ can be reconstructed by using recovering sets $R_1,\dots, R_{k'}$, where $R_i$ is a subset of a line $L_i$ from the space $\F_q^m$. Since the number of lines passing through the point $\a_{k'+1}$ is larger than $q^{m-1}$ and the total number of points already employed for recovering $\{y_{\a_1},\dots,y_{\a_{k'}}\}$ is at most $qk'$, we conclude that there exists a line $L_{k'+1}$ among $q^{m-1}$ ones such that the cardinality of the intersection
$$
\left|L_{k'+1} \bigcap \left\{\bigcup_{i\in[k']}L_{i}\right\}\right|\le \frac{qk'}{q^{m-1}}< \frac{qk}{q^{m-1}}=r.
$$
Therefore, we can reconstruct $\y_{\a_{k'+1}}$ by reading evaluations of $f$ at some $q-r$ unused points on $L_{k'+1}$, interpolating the univariate polynomial $f|_{L_{k'+1}}$ of degree less than $q-r$ and evaluating the latter at point $\a_{k'+1}$. 

Thus, the required multiset of codeword symbols can be determined by this procedure. This completes the proof.
\end{proof}
In the next statement we show a connection between parameters of the non-binary batch code constructed in Theorem~\ref{th:: nonbinary batch code}.
\begin{theorem}\label{th::asymptotic non-binary batch code}
Given a positive integer $m$, for any real $\varepsilon$ with $\frac{m-2}{m}\le\varepsilon <\frac{m-1}{m}$ and a power of two $q$,
there exists a $n^\varepsilon$-batch code of length $N=q^m$ and dimension $n$ over $\F_q$
such that the redundancy, $N-n$,   satisfies
$$
N-n = O\left(n^{(m-\log \lambda_m)\varepsilon + ((m-1)\log \lambda_m/m - m + 2)}\right).
$$
\end{theorem}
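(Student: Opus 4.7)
The plan is to reduce this asymptotic statement to a direct calculation from Theorem~\ref{th:: nonbinary batch code} by expressing the redundancy parameter $r$ of the lifted RS code in terms of $n$ and $\varepsilon$.

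First I would set up the translation between the parameters $(q,m,r)$ of Theorem~\ref{th:: nonbinary batch code} and the batch parameter $k = n^{\varepsilon}$. Since the rate of the constructed code tends to $1$, the dimension $n$ satisfies $n = \Theta(N) = \Theta(q^m)$, so $q = \Theta(n^{1/m})$. The batch property of Theorem~\ref{th:: nonbinary batch code} gives $k = q^{m-2}r$, and equating this to $n^{\varepsilon} = \Theta(q^{m\varepsilon})$ forces
\begin{equation*}
r = \Theta\bigl(q^{m\varepsilon - m + 2}\bigr).
\end{equation*}
Next I would check that the stated range of $\varepsilon$ is exactly what is needed for this choice of $r$ to be admissible in Theorem~\ref{th:: nonbinary batch code}: the inequality $\varepsilon \ge (m-2)/m$ ensures $r \ge 1$ (so that the construction makes sense and the batch parameter is at least the trivial $q^{m-2}$), while $\varepsilon < (m-1)/m$ ensures $r < q$, which is the hypothesis of the theorem. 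For an explicit construction one should pick $r$ to be the smallest power of two satisfying $q^{m-2} r \ge n^{\varepsilon}$; the resulting $r$ is within a constant factor of $q^{m\varepsilon-m+2}$, which only affects the hidden constant in the $O(\cdot)$.

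The core step is then to substitute this $r$ into the redundancy estimate from Theorem~\ref{th:: nonbinary batch code}. The redundancy equals
\begin{equation*}
N - n = \Theta\bigl(q^m (q/r)^{\log\lambda_m - m}\bigr) = \Theta\bigl(q^{\log\lambda_m}\, r^{m-\log\lambda_m}\bigr),
\end{equation*}
and inserting $r = \Theta(q^{m\varepsilon - m + 2})$ I would collect the exponent of $q$ as
\begin{equation*}
\log\lambda_m + (m\varepsilon - m + 2)(m-\log\lambda_m) = m\varepsilon(m-\log\lambda_m) + (m-1)\log\lambda_m - m(m-2).
\end{equation*}
Finally, converting from $q$ to $n$ via $q = \Theta(n^{1/m})$ divides this exponent by $m$ and yields exactly the claimed exponent $(m-\log\lambda_m)\varepsilon + ((m-1)\log\lambda_m/m - m + 2)$.

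The argument is essentially bookkeeping once the substitution is set up, so there is no real obstacle; the only delicate point is verifying that the endpoints of the interval for $\varepsilon$ match precisely the admissibility constraints $1 \le r < q$ and that the approximation $n = \Theta(q^m)$ (which relies on Theorem~\ref{th:: number of bad monomials} giving rate $1 - o(1)$ when $r = o(q)$, i.e.\ when $\varepsilon < (m-1)/m$) is used consistently.
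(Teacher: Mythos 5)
Your proposal is correct and follows essentially the same route as the paper: choose $r \approx q^{m\varepsilon-m+2}$ (the paper takes $r=\lceil q^{m\varepsilon-m+2}\rceil$), invoke Theorem~\ref{th:: nonbinary batch code}, and convert the exponent from $q$ to $n$ via $n=\Theta(q^m)$; your exponent bookkeeping checks out. The only difference is cosmetic (rounding $r$ to a power of two rather than taking a ceiling, plus an explicit verification of the admissibility range $1\le r<q$, which the paper leaves implicit).
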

\begin{proof}[Proof of Theorem~\ref{th::asymptotic non-binary batch code}]
Let $r= \lceil q^{m\varepsilon-m+2}\rceil \ge n^{\varepsilon-(m-2)/m}$. By Theorem~\ref{th:: nonbinary batch code}, there exists a $k$-batch code with $k=rq^{m-2}\ge q^{m\varepsilon}=n^\varepsilon$ over $\F_q$ of length $N=q^m$ and redundancy at most
\begin{align*}
N-n&=O\left(r^m \lambda_m^{\ell-\log r}\right)\\
&=O\left(2^{\ell m (m\varepsilon-m+2)}\lambda_m^{\ell-\ell (m\varepsilon-m+2)}\right)\\
&=O\left(n^{(m-\log \lambda_m)\varepsilon + ((m-1)\log \lambda_m/m - m + 2)}\right) .
\end{align*}
\end{proof}
\begin{theorem}\label{th::binary batch code}
	Given a positive integer $m$, for any real $\varepsilon$ with $\frac{m-2}{m}\le  \varepsilon \le\frac{m-1}{m}$, any real $\delta > 0$ and an integer $n$ sufficiently large,
there exists a binary $n^{\varepsilon-\delta}$-batch code of length $N$ and dimension $n$
such that the redundancy, $N-n$, satisfies
$$
N-n = O\left(n^{(m-\log \lambda_m)\varepsilon + ((m-1)\log \lambda_m/m - m + 2)}\right).
$$
\end{theorem}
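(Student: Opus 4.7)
The plan is to reduce the binary batch problem to the non-binary one solved in Theorem~\ref{th::asymptotic non-binary batch code}. I would fix an auxiliary exponent $\varepsilon' = \varepsilon - \delta/2$ (shrinking $\delta$ slightly if necessary so that $\varepsilon'$ lies in the interval $[\tfrac{m-2}{m}, \tfrac{m-1}{m})$ required by Theorem~\ref{th::asymptotic non-binary batch code}) and apply that theorem to obtain a systematic non-binary $n_0^{\varepsilon'}$-batch code over $\F_q$, $q = 2^\ell$, of length $N_0 = q^m$, dimension $n_0$, and redundancy $O\bigl(n_0^{c(\varepsilon')}\bigr)$, where
\[
c(\varepsilon') \eqdef (m-\log \lambda_m)\varepsilon' + \bigl((m-1)\log \lambda_m/m - m + 2\bigr).
\]
I would then expand each $\F_q$-coordinate into its $\ell$-bit binary representation, yielding a binary linear code of length $N = \ell N_0$ and dimension $n = \ell n_0$. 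For a general target dimension $n$, one chooses the largest $q = 2^\ell$ with $\ell q^m \le n$ and adjoins the missing bits as uncoded systematic positions, which perturbs neither the asymptotic length nor the redundancy nor the batch parameter.

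For the batch property, I would associate each of the $k = n^{\varepsilon - \delta}$ bit queries with the unique $\F_q$-coordinate into which it was expanded, treating it as a symbol query in the non-binary code. The non-binary batch property supplies $k$ pairwise disjoint symbol recovery sets $R_1,\ldots,R_k$, each consisting of at most $q-1$ points on some line $L_j \in \L_m$ avoiding the queried point, provided $k \le n_0^{\varepsilon'}$. Substituting $n_0 = n/\ell$, the required inequality reduces to $(\varepsilon' - \varepsilon + \delta)\log n \ge \varepsilon' \log \ell$, which holds for $n$ large because $\varepsilon' - \varepsilon + \delta = \delta/2 > 0$ and $\log \ell = O(\log\log n)$. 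Replacing each symbol index in $R_j$ by its $\ell$ bit positions preserves pairwise disjointness and produces bit-level recovery sets; the queried bit is then retrieved by interpolating $f|_{L_j}$ from the $q-r$ known symbol evaluations on $L_j$ and reading off the appropriate bit of the reconstructed $\F_q$-coordinate.

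For the redundancy, each non-binary parity symbol contributes exactly $\ell$ parity bits, so
\begin{align*}
N - n \;=\; \ell\,(N_0 - n_0) \;=\; O\bigl(\ell\,n_0^{c(\varepsilon')}\bigr) \;=\; O\bigl(n^{c(\varepsilon')}\,\ell^{1-c(\varepsilon')}\bigr).
\end{align*}
Since $c(\varepsilon) - c(\varepsilon') = (m - \log\lambda_m)\,\delta/2 > 0$ and $\ell = O(\log n)$ is polylogarithmic in $n$, the factor $\ell^{1 - c(\varepsilon')}$ is dominated by $n^{c(\varepsilon) - c(\varepsilon')}$ for $n$ sufficiently large, yielding $N - n = O\bigl(n^{c(\varepsilon)}\bigr)$ as required. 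The main subtlety—and the reason for the $\delta$-slack in the batch exponent—is that the symbol-to-bit expansion inflates both the batch parameter and the redundancy bound by $\mathrm{polylog}(n)$ factors; the strict inequality $\varepsilon' < \varepsilon$ is precisely what creates the polynomial cushion needed to absorb these factors without degrading the claimed exponent $c(\varepsilon)$.
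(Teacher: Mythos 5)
Your construction is essentially the paper's: take the non-binary code of Theorem~\ref{th::asymptotic non-binary batch code}, expand each $\F_q$-symbol into $\ell=\log q$ bits, note that every symbol-level recovering set yields a bit-level recovering set (so the availability parameter is inherited), and account for the factor $\ell$ in length, dimension and redundancy. The one place you genuinely diverge is the bookkeeping of the polylogarithmic factors: the paper invokes the non-binary theorem at the exponent $\varepsilon$ itself and spends the $\delta$-slack only on the batch parameter, passing from $k\ge (n'/\log n')^{\varepsilon}$ to $n'^{\varepsilon-\delta}$, which leaves an unabsorbed factor $\log n'$ in its redundancy estimate; you instead instantiate the non-binary theorem at $\varepsilon'=\varepsilon-\delta/2$ and use the strict monotonicity of the exponent $c(\cdot)$ (valid because $m-\log\lambda_m>0$, i.e., $\lambda_m<2^m$) to absorb the $\ell^{1-c(\varepsilon')}$ factor into $n^{c(\varepsilon)-c(\varepsilon')}$. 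This is cleaner and in fact repairs a small imprecision in the paper's own proof, at the harmless cost of requiring $\varepsilon'\ge\frac{m-2}{m}$, which your shrinking of $\delta$ handles. One caveat: your aside about reaching an arbitrary target dimension $n$ by adjoining uncoded systematic bits does not preserve the batch property as stated, since a multiset consisting of $k$ copies of an uncoded bit cannot be served by disjoint recovering sets; either drop that remark (the paper implicitly restricts to the dimensions realized by the construction) or pad with sufficiently replicated bits.
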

\begin{proof}[Proof of Theorem~\ref{th::binary batch code}]
	Let $\C$ be a non-binary batch code from Theorem~\ref{th::asymptotic non-binary batch code}. We construct the binary batch code $\C'$ from $\C$ by converting each symbol of the alphabet of size $q$ to $\log q=\log N^{1/m}=\frac{1}{m}\log N=\Theta(\log n)$ bits. Denote the length, dimension of the binary
	code by $N', n'$ respectively. Thus, $n'= \Theta(n \log n)$ and $N'=\Theta(N \log n)$. Therefore, $n=\Theta(n'/\log n')$. Denote by $r' = N'-n'$ the redundancy of the binary code and by $k'$ be the availability parameter of the new code. 
	
	First, we note that the availability parameter of $\C'$ is at least that of $\C$. Indeed, we know that each bit in $\C'$ is a bit among $\log q$ bits representing some symbol in $\C$. For each recovering set of a symbol in $\C$, we have the corresponding recovering set for any bit from the image of this symbol in $\C'$. Therefore, $k'\ge k=n^\epsilon\ge (n'/\log n')^{\varepsilon}$.
	
	Second, we rewrite the redundancy $r'$ in terms of $n'$ as
	\begin{align*}
	r'&=N'-n'=O((N-n)\log n) \\
	&=O\left(n'^{(m-\log \lambda_m)\varepsilon + ((m-1)\log \lambda_m/m - m + 2)}\log n'\right) .
	\end{align*}
As for any $\delta>0$ and sufficiently large $n$ we have $\log n < n^{\delta}$, the required statement is proved.
\end{proof}
\section{Conclusion}\label{ss::conclusion}
In this paper, we have investigated the code rate of lifted Reed-Solomon codes and discussed how to use the latter to construct batch codes. Our results are two-fold.
\begin{enumerate}[wide]
    \item We have improved the estimate on the rate of the $m$-dimensional lifts of the RS codes when the field size is large. In particular, we have shown that 
for $r=O(1)$, the $[m,q-r,q]$-lifted-RS code has rate $1-\Theta(q^{\log{\lambda_m}-m})$ as $q\to\infty$. As a further research direction, it would be of great interest to analyze lifted multiplicity codes when the parameter of lifting $m\ge 3$. This would continue the study initiated by Li and Wootters in~\cite{li2019lifted} of two-dimensional lifts. It has been shown that this natural generalization makes the construction much more flexible for various parameters.
\item The locality property of lifted RS codes makes them attractive for constructing locally correctable codes and codes with the disjoint repair group property. Additionally, we have shown that a $[m,q-r,q]$-lifted-RS code is also a $k$-batch code with $k=rq^{m-2}$. This improves the known upper bounds on the redundancy of batch codes in some parameter regimes. On the other hand, there is no lower bound on the redundancy beyond the lower bound for $k=3$, stating~\cite{wootters2016linear} that the redundancy of linear batch codes of length $N$ is $\Omega(\sqrt{N})$.  An improvement of the latter for larger $k$ remains an interesting open problem.
\end{enumerate}
\appendix
\subsection{Lifted-RS codes from $d^*$-good monomials}
We shall show that lifted-RS codes include the evaluation of $d^*$-good monomials (and their linear combinations). By Definition~\ref{def::lifted RS code}, it suffices that every $d^*$-good monomial $f(\X)=\X^\d$ over $\F_q$ satisfies the property that for any line $L\in\L_m$, the restriction $f|_L$ is an univariate polynomial of degree less than $d$.  Let a line $L$ be parameterized as $(\a T+\b)|_{T\in\F_q}$ and $\0$ be the all-zero vector. Then, we have that 
\begin{align*}
f|_L &=(\a T+\b)^\d \\
&= \sum_{\0\le \i\le \d}\prod_{j=1}^{m}a_j^{i_j}b_j^{d_j-i_j}\binom{d_j}{i_j} T^{i_j}\\
&=\sum_{k=0}^{q-1} c_k T^k,
\end{align*}
where coefficients $c_k$ are derived by using the property $T^q=T$ in $\F_q[T]$
$$
c_k\eqdef \sum_{\substack{\0\le \i\le \d \\ \deg(\i)\Mods{q} = k }}\prod_{j=1}^{m}a_j^{i_j}b_j^{d_j-i_j}\binom{d_j}{i_j}.
$$
By Definition~\ref{def::bad * monomial}, for $k\ge d$, there is no $\i\in\Z_q^m$ such that $\i\le_2\d$ and $\deg(\i)\Mods{q} = k$. Thus, for $k\ge d$ and every $\i$ used in the summation above for defining $c_k$, there exists some coordinate $j\in[m]$ such that $i_j\not\le_2 d_j$. By Lucas's Theorem (e.g., see~\cite{guo2013new,li2019lifted}), for integers $d_j=(d_j^{(\ell-1)},...,d_j^{(0)})_2$ and $i_j=(i_j^{(\ell-1)},...,i_j^{(0)})_2$ it holds that 
\begin{equation*}
    \binom{d_j}{i_j} = \prod_{\xi=0}^{\ell-1} \binom{d_j^{(\xi)}}{i_j^{(\xi)}} \mod 2  .
\end{equation*}
It follows that if $i_j\not\le_2 d_j$ the coefficient $\binom{d_j}{i_j}=0$ in $\F_q$ (as $q$ is a power of two) and therefore $c_k=0$ for all $k\ge d$.

We have proved that the restriction of $\X^\d$ to any line is an univariate polynomial of degree at most $d-1$. Therefore, the $[m,d,q]$-lifted-RS code includes codewords 
$$
\{(\a^\d)|_{\a\in\F_q^m}:\quad \X^\d \text{ is }d^*\text{-good over }\F_q[\X]\}
$$
and their linear combinations over $\F_q$. This completes the proof.
\subsection{Proof of Proposition~\ref{pr::recurrent formula}}\label{ss::proof of recurrent formula}
We have 
two important ingredients, Lemma~\ref{lem::decrease weight} and Lemma~\ref{lem::drop the leading bit}, in the proof of Proposition~\ref{pr::recurrent formula}.
\begin{lemma}\label{lem::decrease weight}
    If $\d\in S_{j}(\ell)$ for a non-negative integer $j$, then $\d\in S_{l}(\ell)$ for any non-negative integer $l< j$.
\end{lemma}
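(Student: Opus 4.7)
The plan is to reduce the lemma to a single-step claim and iterate: if $\d \in S_j(\ell)$ with $j \geq 1$, then $\d \in S_{j-1}(\ell)$. Given any $l < j$, applying this $j - l$ times yields $\d \in S_l(\ell)$. For the one-step reduction, unpack the hypothesis: there exists $\i \le_2 \d$ with $\deg(\i) = (q-r) + jq$. Since $j \geq 1$ and $r \leq q$, we have $\deg(\i) \geq q$. The goal is to exhibit $\i' \le_2 \i$ with $\deg(\i') = \deg(\i) - q$; since $\i'$ is obtained from $\i$ purely by clearing set bits, we automatically retain $\i' \le_2 \i \le_2 \d$, and the resulting degree is exactly $(q-r) + (j-1)q$, giving $\d \in S_{j-1}(\ell)$.

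The remaining work is a purely combinatorial claim about bit patterns: viewing $\deg(\i) = \sum_{k,\xi} i_k^{(\xi)} \, 2^\xi$ as the total weight of a multiset $M$ of powers drawn from $\{2^0, 2^1, \dots, 2^{\ell-1}\}$ (one element for each set bit of $\i$ across all $m$ coordinates), producing $\i'$ amounts to finding a sub-multiset $M' \subseteq M$ with $\sum_{v \in M'} v = 2^\ell$. So the key lemma I would state and prove is: \emph{any multiset of powers $2^0, \dots, 2^{\ell-1}$ with total weight at least $2^\ell$ admits a sub-multiset of weight exactly $2^\ell$.}

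I would prove this by induction on $\ell$. The base $\ell = 1$ just uses two copies of the element $1$, which must exist since the total is at least $2$. For the inductive step, let $n$ denote the multiplicity of $2^{\ell-1}$ in $M$. If $n \geq 2$, two copies of $2^{\ell-1}$ work. If $n = 1$, removing that copy leaves a multiset of powers in $\{2^0, \dots, 2^{\ell-2}\}$ of weight at least $2^{\ell-1}$, and the inductive hypothesis (for $\ell-1$) supplies a sub-multiset of weight $2^{\ell-1}$; combined with the isolated $2^{\ell-1}$, this gives $2^\ell$. If $n = 0$, the multiset already lies in $\{2^0, \dots, 2^{\ell-2}\}$ and has weight at least $2^\ell = 2 \cdot 2^{\ell-1}$; apply the inductive hypothesis once to pick out a sub-multiset of weight $2^{\ell-1}$, note that the complement still has weight at least $2^{\ell-1}$, and apply the inductive hypothesis again to the complement to extract a second disjoint sub-multiset of weight $2^{\ell-1}$. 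The union is the desired sub-multiset.

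The main obstacle I anticipate is precisely the $n = 0$ case, which forces two successive invocations of the inductive hypothesis and requires the bookkeeping observation that after the first extraction the remainder still has weight at least $2^{\ell-1}$ (immediate from the initial lower bound of $2^\ell$). Once the combinatorial lemma is in hand, the reduction $S_j(\ell) \subseteq S_{j-1}(\ell)$ is a direct translation and iteration delivers the stated lemma.
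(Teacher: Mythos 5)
Your proof is correct. Both you and the paper establish the lemma by the same underlying mechanism: starting from $\i\le_2\d$ with $\deg(\i)=(2^\ell-r)+j2^\ell$, produce $\i'$ by clearing set bits of $\i$ so that the degree drops by an exact multiple of $2^\ell$, which preserves $\i'\le_2\i\le_2\d$ and certifies membership in the lower $S_l(\ell)$. Where you differ is in the organization. The paper removes the full amount $(j-l)2^\ell$ in a single greedy top-down pass over the bit positions, maintaining a deficit counter that doubles at each level and arguing termination by contradiction with $\deg(\i)\ge j2^\ell$; the write-up of that procedure is terse (and contains what looks like a typo in the deficit update). You instead reduce to the one-step inclusion $S_j(\ell)\subseteq S_{j-1}(\ell)$ and isolate a clean, standalone extraction lemma --- any multiset of powers $2^0,\dots,2^{\ell-1}$ of total weight at least $2^\ell$ contains a sub-multiset of weight exactly $2^\ell$ --- proved by induction on $\ell$, with the three cases on the multiplicity of $2^{\ell-1}$ handled correctly (including the double invocation of the inductive hypothesis when that multiplicity is zero). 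Your version buys modularity and a more transparently verifiable correctness argument at the cost of iterating $j-l$ times; the paper's buys a one-shot construction at the cost of a fiddlier invariant. Both exploit the same essential fact that every bit-value is at most $2^{\ell-1}$, so the target weight can always be hit exactly.
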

\begin{proof}[Proof of Lemma~\ref{lem::decrease weight}]
As $\d\in S_j(\ell)$, there exists some $\i$ such that $\i\le_2 \d$ and $\deg(\i)=(q-r)+jq=(2^{\ell}-r)+j 2^{\ell}$. 
We shall prove that there exists $\i'$ such that $\i'\le_2 \i$ and $\deg(\i')=(2^{\ell}-r)+l 2^{\ell}$. This is sufficient for showing $\d\in S_l(\ell)$. To see it, we provide an iterative procedure that takes an arbitrary $\i\in \Z_q^m$ with $\deg(\i)\ge j2^\ell$ and outputs $\a\le_2 \i$ with $\deg(\a) = \deg(\i) - (j-l)2^{\ell}$ for $l\in[j]$. The procedure goes from the leading bits to the least significant ones and replaces some ones in the binary representations of $\i=(i_1,\ldots,i_m)$ by zeros.
\begin{enumerate}[labelwidth=!]
    \item \textbf{Step 1.} Let us initialize $\a\gets \i$ and $\Delta \gets (j-l)$ and $h \gets {\ell}$.
    \item \textbf{Step 2.} If $h=0$, output $\a$. Else, let $h \gets h-1$ and $\Delta\gets 2\Delta$. Compute $\delta=\Delta-\sum_{\xi=1}^m a_\xi^{(h)}$. 
    If $\delta > 0$, let $\Delta\gets \Delta-\delta$ and $a_\xi^{(h)}\gets 0$ for all $\xi\in [m]$. Repeat Step 2.     Else, let  $m'$ satisfy $\Delta-\sum_{\xi=1}^{m'} a_{\xi}^{(h)}=0$ and let $a_{\xi}^{(h)}\gets 0$ for all $\xi \in [m']$. Output $\a$.
\end{enumerate}
According to the procedure, we output the correct $\a$ if we do the else-part in Step 2 at some point. Assume this never happens. This means that we output the all-zero tuple at the end. However,  $\Delta=(j-l)2^\ell - \deg(\i)>0$ at the final step which contradicts with $\deg(\i)\ge j2^\ell$.
This completes the proof.
\end{proof}

\begin{example}
Consider the parameters $q=2^\ell=4$, $m=2$, $r=2$, $j=1$, and $l=0$. For the element $\mathbf{d} = (3,3) \in S_1(2)$ and $\mathbf{i} = (3,3) = (11,11)_2$ with $\mathbf{i}\leq_2 \mathbf{d}$ we will find the corresponding $\mathbf{a}$ with $\mathbf{a} \leq_2 \mathbf{i}$ and $\deg(\mathbf{a}) = \deg(\mathbf{i}) -(j-l)2^\ell = 2$. 
\begin{enumerate}
    \item \textbf{Step 1.} Initialize $\mathbf{a} \gets (3,3)$ and $\Delta \gets j-l = 1$ and $h\gets \ell=2$.
    \item \textbf{Step 2.} Let $h\gets h-1 = 1$ and $\Delta \gets 2\Delta=2$. Compute $\delta = \Delta - \sum_{\xi =1}^m a_\xi^{(h)} = 0$. Since $\delta \not> 0$ we choose $m'=2$ to satisfy $\Delta-\sum_{\xi=1}^{m'} a_\xi^{(h)} = 0$ and set $a_{1}^{(1)}\gets 0$, $ a_{2}^{(1)} \gets 0$ to obtain $\mathbf{a} = (01,01)_2 = (1,1)$.
\end{enumerate}
As $\mathbf{a} \leq_2 \mathbf{i} \leq_2 \mathbf{d}$ and $\deg(\mathbf{a}) = q-r= 2$ it follows that $\mathbf{d} \in S_0(2)$.
\end{example}

Let us introduce some auxiliary functions. We define two maps $F_{drop}:\Z_{2^\ell}\to \Z_{2^{\ell-1}}$ and $F_{lead}:\Z_{2^\ell}\to \Z_{2}$ that take an integer $a=\sum_{i=0}^{\ell-1} a^{(i)} 2^i$ and output $a-2^{\ell-1}a^{(\ell-1)}$ and $a^{(\ell-1)}$, respectively (we either drop the leading bit in the binary representation of $a$ or output it). 
We extend the maps $F_{drop}$ and $F_{lead}$ to $\Z_{2^\ell}^m$ in a straightforward manner by applying functions to each component of a vector $\a\in\Z_{2^\ell}^m$, that is
\begin{align*}
&F_{drop}(\a)=(F_{drop}(a_1),\dots,F_{drop}(a_m)),\\ &F_{lead}(\a)=(F_{lead}(a_1),\dots,F_{lead}(a_m)).
\end{align*}
For an integer $a$, we denote $\max(a,0)$ by $(a)^+$.
\begin{lemma}\label{lem::drop the leading bit}
If $\d\in S_{j}(\ell+1)$ for a non-negative integer $j$, then $F_{drop}(\d)$ belongs to $S_{0}(\ell), S_{1}(\ell),\dots,S_{(2j+1-|F_{lead}(\d)|)^{+}}(\ell)$.
\end{lemma}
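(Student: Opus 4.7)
The plan is to peel off the leading (position-$\ell$) bit of the witness and show that what remains is a witness for $F_{drop}(\d)$ at many different levels $l$, then invoke Lemma~\ref{lem::decrease weight} to sweep out the entire claimed range.

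First I would unpack the hypothesis: $\d \in S_j(\ell+1)$ supplies some $\i \in \Z_{2^{\ell+1}}^m$ with $\i \le_2 \d$ and $\deg(\i) = (2^{\ell+1}-r) + j\cdot 2^{\ell+1}$. Writing $\i = \i_{drop} + 2^\ell \i_{lead}$ and $\d = \d_{drop} + 2^\ell \d_{lead}$ according to the splitting $F_{drop}/F_{lead}$, the relation $\i \le_2 \d$ is equivalent to the componentwise bitwise conditions $\i_{drop} \le_2 \d_{drop}$ and $\i_{lead} \le \d_{lead}$; in particular $w \eqdef |\i_{lead}| \le |\d_{lead}| = |F_{lead}(\d)|$.

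Next I would compute the key degree identity. Because the leading bits contribute exactly $2^\ell |\i_{lead}|$ to $\deg(\i)$,
\begin{equation*}
\deg(\i_{drop}) = \deg(\i) - 2^\ell w = 2^\ell(2j+2-w) - r = (2^\ell - r) + (2j+1-w)\cdot 2^\ell.
\end{equation*}
Since $\i_{drop} \in \Z_{2^\ell}^m$ forces $\deg(\i_{drop}) \ge 0$, we get $w \le 2j+1$ (for $\ell$ large enough that $r \le 2^\ell$, which is the regime of interest). Taking $\i_{drop}$ itself as its own witness, this identity says $\i_{drop} \in S_{2j+1-w}(\ell)$.

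Now I would invoke Lemma~\ref{lem::decrease weight}: $\i_{drop} \in S_{2j+1-w}(\ell)$ implies $\i_{drop} \in S_l(\ell)$ for every $0 \le l \le 2j+1-w$. For each such $l$, there is some $\i' \le_2 \i_{drop}$ with $\deg(\i') = (2^\ell - r) + l \cdot 2^\ell$, and transitivity gives $\i' \le_2 \d_{drop} = F_{drop}(\d)$, so $F_{drop}(\d) \in S_l(\ell)$. Finally, since $w \le |F_{lead}(\d)|$ we have $2j+1-w \ge (2j+1-|F_{lead}(\d)|)^+$, so the values of $l$ that we have covered include all of $0,1,\dots,(2j+1-|F_{lead}(\d)|)^+$, which is exactly the claim.

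The only delicate point is managing the $(\,\cdot\,)^+$: the step "$\i_{drop} \in S_{2j+1-w}(\ell)$" needs the index to be non-negative, which is where the bound $w\le 2j+1$ (forced by $\deg(\i_{drop})\ge 0$) is essential. After that, the descent lemma does all the real work and the bound on $|F_{lead}(\d)|$ is inherited for free from the bitwise comparison $\i_{lead} \le \d_{lead}$.
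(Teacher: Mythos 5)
Your proof is correct and follows essentially the same route as the paper's: drop the leading bits of the witness $\i$, compute $\deg(F_{drop}(\i)) = (2^\ell-r)+(2j+1-|F_{lead}(\i)|)2^\ell$, bound $|F_{lead}(\i)|$ by both $2j+1$ and $|F_{lead}(\d)|$, and then invoke Lemma~\ref{lem::decrease weight} to cover the whole range of indices. The only cosmetic difference is that you apply the descent lemma to $\i_{drop}$ and transfer the witness to $F_{drop}(\d)$ by transitivity of $\le_2$, whereas the paper applies it to $F_{drop}(\d)$ directly; the substance is identical.
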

\begin{proof}[Proof of Lemma~\ref{lem::drop the leading bit}] By definition, if $\d\in S_{j}(\ell+1)$, then there exists some $\i\in\Z_{2^{\ell+1}}^m$ with $\i\le_2\d$ and $\deg(\i)=(2^{\ell+1}-r)+j2^{\ell+1}$. It is obvious that if the leading bits in $\i$ are dropped, then the sum of components of $F_{drop}(\i)$
\begin{align*}
\deg(F_{drop}(\i)) &= \deg(\i)-|F_{lead}(\i)|2^{\ell}\\
&= (2^{\ell}-r)+(2j+1-|F_{lead}(\i)|)|2^{\ell}.
\end{align*}
 Since we also have the property $F_{drop}(\i)\le_2 F_{drop}(\d)$, we obtain that $F_{drop}(\d)$ belongs to $S_{2j+1-|F_{lead}(\i)|}(\ell-1)$. Additionally, we note that $|F_{lead}(\i)|\le \min(2j+1,|F_{lead}(\d)|)$ as $\i\le_2\d$ and $\deg(\i)=(2^{\ell}-r)+j2^{\ell}$. From this and Lemma~\ref{lem::decrease weight}, we conclude that $r(\d)$ belongs to $S_{0}(\ell-1)$, $S_{1}(\ell-1)$, $\dots$, $S_{(2j+1-|F_{lead}(\d)|)^{+}}(\ell-1)$. This completes the proof.
\end{proof}
Note that we can uniquely encode $\d\in\Z_{2^{\ell+1}}^m$ by the pair $(F_{lead}(\d), F_{drop}(\d))$. Let us define the set $Pair(j)$ as follows
$$
Pair(j)=\left\{(F_{lead}(\d), F_{drop}(\d)):\quad \d\in S_j(\ell+1) \right\}.
$$
For $w\in\{0,\dots,m\}$, we define the set $T^{(w)}(j)$ as follows
$$
T^{(w)}(j) = 
\{(\v,\y):\,\, \v\in\Z_2^m,\y\in S_{(2j+1-w)^+}(\ell),\, |\v|=w \}.
$$
Recall that $s_j(\ell)=|S_j(\ell)|$. To show 
\begin{align*}
s_j(\ell+1) &= {m \choose \ge 2j+1}s_0(\ell) +{m \choose 2j}s_1(\ell) \\
&+ {m \choose  2j-1}s_2(\ell)+\dots+{m \choose 2j-m+3}s_{m-2}(\ell)\\
&+{m \choose 2j-m+2}s_{m-1}(\ell)+{m \choose 2j-m+1}s_{m}(\ell) ,
\end{align*}
it remains to prove that the disjoint union of $T^{(w)}(j)$ coincides with $Pair(j)$, that is
$$
\bigsqcup_{w\in\{0,\dots,m\}} T^{(w)}(j)=Pair(j).
$$

First, we prove that each element in $Pair(j)$ is covered by the union. Let $(F_{lead}(\d), F_{drop}(\d))\in Pair(j)$ for some $\d\in S_j(\ell+1)$. By denoting $w=|F_{lead}(\d)|$ and applying Lemma~\ref{lem::drop the leading bit}, we get that $F_{drop}(\d)\in S_{(2j+1-w)^+}(\ell)$. Therefore, $(F_{lead}(\d), F_{drop}(\d))\in T^{(w)}(j)$. 

Second, we show that each element in $T^{(w)}(j)$ is included in $Pair(j)$. Let $(\v,\y)\in T^{(w)}(j)$. Construct $\d\in \Z_{2^{\ell+1}}^m$ to satisfy $F_{lead}(\d)=\v$ and $F_{drop}(\d)=\y$. By definition, we have that $|\v|=w$ and $\y\in S_{(2j+1-w)^{+}}(\ell)$. The latter means that there exists an $\i$ such that $\i\le_2 \y$ and $\deg(\i)=(2^{\ell}-r) + (2j+1-w)^{+}2^{\ell}$. Construct $\i'\in \Z_{2^{\ell+1}}^m$ such that $F_{drop}(\i')=\i\le_2 \y=F_{drop}(\d)$ and $F_{lead}(\i')\le_2 \v=F_{lead}(\d)$ and $|F_{lead}(\i')|=\min(2j+1,w)$. Thus, we obtain that $\i'\le_2 \d$ and $\deg(\i')=(2^{\ell+1}-r)+j2^{\ell+1}$. This completes the proof.

	\bibliographystyle{IEEEtran}
	\bibliography{lifted}
\end{document}